\newcommand{\mytitle}{Non-Monochromatic and Conflict-Free Coloring on Tree Spaces and Planar Network Spaces}
\title{\mytitle}
\newcommand{\reals}{\mathbb{R}\xspace}
\newcommand{\Reals}{\reals}
\newcommand{\integers}{\mathbb{N}\xspace}
\newcommand{\naturals}{\integers}
\newcommand{\etal}{\emph{et~al.}\xspace}
\newcommand{\graph}{\mathcal{N}\xspace}
\DeclareMathOperator{\cf}{cf}
\DeclareMathOperator{\nm}{nm}
\newcommand{\cov}{cov}
\newcommand{\mypara}[1]{\vspace{10pt} \noindent \textbf{#1}}
\newcommand{\floor}[1]{\left\lfloor #1 \right\rfloor}
\newcommand{\ceil}[1]{\left\lceil #1 \right\rceil}
\renewcommand{\leq}{\leqslant}
\renewcommand{\geq}{\geqslant}
\newcommand{\tree}{\mathcal{T}}
\newcommand{\Col}{c\xspace}
\newcommand{\col}{\Col}
\newcommand{\TreeSpace}{\mathcal{T}\xspace}
\newcommand{\treespace}{\TreeSpace}
\newcommand{\Objects}{\mathcal{A}\xspace}
\newcommand{\objects}{\Objects}
\newcommand{\Core}{\mathcal{C}\xspace}
\newcommand{\core}{\Core}
\newcommand{\CFCN}[2]{X_{\cf}^{\mathrm{#1},\mathrm{#2}}}
\newcommand{\CFCNtree}[1]{\CFCN{tree}{#1}}
\newcommand{\CFCNtreepath}{\CFCNtree{paths}}
\newcommand{\CFCNtreetree}{\CFCNtree{trees}}
\newcommand{\NMCN}[2]{X_{\nm}^{\mathrm{#1},\mathrm{#2}}}
\newcommand{\NMCNtree}[1]{\NMCN{tree}{#1}}
\newcommand{\NMCNtreetree}{\NMCNtree{trees}}
\theoremstyle{plain}
\newtheorem{theorem}{Theorem}[section]
\newtheorem{lemma}[theorem]{Lemma}
\newtheorem{observation}[theorem]{Observation}
\definecolor{nicegreen}{rgb}{0,0.7,0.3}
\definecolor{niceyellow}{rgb}{0.9,0.7,0.07}
\definecolor{nicepurple}{rgb}{0.5,0.2,0.8}
\author{Boris Aronov\thanks{New York University, USA. %
		BA has been partially supported by NSF Grants CCF-11-17336, CCF-12-18791, and CCF-15-40656, and by BSF grant 2014/170.}
        \and 
        Mark de Berg\thanks{TU Eindhoven, the Netherlands.
        MdB and AM are supported by the Netherlands' Organisation for Scientific
        Research (NWO) under project no.~024.002.003.}
        \and
        Aleksandar Markovic$^{\dagger}$
        \and
        Gerhard Woeginger\thanks{RWTH Aachen, Germany.}
        }
\begin{document}

\maketitle


\begin{abstract}
It is well known that any set of $n$ intervals in $\Reals^1$ admits a
non-monochromatic coloring with two colors and a conflict-free coloring
with three colors. We investigate generalizations of this result to
colorings of objects in more complex 1-dimensional spaces, namely
so-called tree spaces and planar network spaces.
\end{abstract}


\section{Introduction}
\emph{Conflict-free colorings}, or CF-colorings for short, were introduced by
Even~\etal~\cite{even-cf-03} and Smorodinsky~\cite{thesis-smorodinsky}
to model frequency assignment to base stations in wireless networks.
In the basic setting one is given a set~$S$ of objects in the
plane---often disks are considered---and the goal is to assign a
color to each object such
that the following holds: for any point~$p$ in the plane such that the set~$S_p:=\{D\in S\mid p\in D\}$
of objects containing~$p$ is non-empty, $S_p$ must contain an object whose color is different
from the colors of the other objects in~$S_p$. Even~\etal~proved, among other things,
that any set of disks admits a CF-coloring with~$O(\log n)$
colors. This bound is tight in the worst case.
Since then many different geometric variants of CF-colorings
have been studied. For example, Har-Peled
and Smorodinsky~\cite{harpeled-cf-05} generalized the
result to objects with near-linear union complexity, while
Even~\etal~\cite{even-cf-03} considered the dual version of the
problem. See the survey by Smorodinsky~\cite{S-survey-10} for an overview.
A restricted type of a CF-coloring is a \emph{unique-maximum} (\emph{UM}) \emph{coloring},
in which the colors are identified with integers, and the
maximum color in the set~$S_p$ is required to be unique. Another type of coloring,
often used as an intermediate step to obtain a CF-coloring, is
\emph{non-monochromatic}~(\emph{NM}). In an NM-coloring---sometimes
called \emph{a proper coloring}---we only require that, for any
point~$p$ in the plane, if the set~$S_p$ contains at least two
elements, not all of them have the same color.
Smorodinsky \cite{smor-geomCF-06} showed that if an
NM-coloring of~$k$ elements using~$\beta(k)$ colors exists for
every~$k$, one can CF-color $n$ elements with~$O(\beta(n) \log n)$ colors.

CF- or NM-coloring objects in $\Reals^1$ 
is significantly easier than in the planar case. In $\Reals^1$ the objects become
intervals, assuming we require the objects to be connected, and a folklore result
states that any set of intervals in $\Reals^1$ can be CF-colored with three
colors and NM-colored with two colors. (This is achieved by the
\emph{chain methods}, which we describe below.) Thus,
unlike in the planar case, the number
of colors for a CF- or NM-coloring of intervals in $\Reals^1$ does not depend
on the number of intervals to be colored.
\medskip

We are interested in generalizations of this result to 1-dimensional
spaces that have a more complex topology than~$\Reals^1$. To this end we consider
\emph{network spaces}: 1-dimensional spaces with the topology of an arbitrary
graph. It is convenient to view a network space~$\graph$ as being embedded
in~$\Reals^2$, although the embedding is actually immaterial.
In this view the \emph{nodes} of~$\graph$ are points in $\Reals^2$,
and the \emph{edges} are simple curves connecting pairs of nodes and otherwise disjoint.
We let~$d\colon\graph^2 \to \reals_+$ denote the geodesic distance on $\graph$.
In other words, for two points $p,q\in \graph$---these points
may lie in the interior of an edge---we let $d(p,q)$ denote the
minimum Euclidean length of any path connecting $p$ to $q$ in $\graph$.
We consider two special types of network spaces, \emph{tree spaces}
and \emph{planar network spaces}, whose topology is that of a
tree and a planar graph, respectively.

The objective of our paper is to investigate the number of colors needed to
CF- or NM-color a set~$\objects$ of $n$ objects in a network space, where we
consider various classes of connected objects. (Here CF- and NM-colorings
are defined as above: in a CF-coloring, for any point $p\in \graph$ the
set $S_p := \{ o\in \Objects \mid p\in o \}$ of objects containing~$p$
should have an object with a unique color when it is non-empty, and in an NM-coloring
the set $S_p$ should not be monochromatic when it consists of at least two objects.)
In particular, we consider balls on $\graph$---the~\emph{ball centered at~$p\in\graph$
of radius~$r$} is defined as $B(p,r):= \{q\in\graph \mid d(p,q)\leq r\}$--- and, for tree spaces,
we also consider arbitrary connected subsets as objects.
Note that, if the
given network space is a single curve, then our setting, both for balls
and for connected subspaces, reduces to
coloring intervals in~$\Reals^1$.  The~main question we want to answer is:
How does the maximum number of colors needed to NM- or CF-color a set $\objects$
of objects in a network space depend on the complexity of the
network space and of the objects to be colored?

\mypara{Our results.}
We assume without loss of generality that the nodes in our network space
either have degree~1 or degree at least~3---there are no nodes of degree~2.
Nodes of degree~1 are also called \emph{leaves}, and nodes of degree at least~3
are also called \emph{internal nodes}.

We start by considering colorings on a tree space, which we denote by~$\treespace$.
Let~$\Objects$ be the set of $n$ objects that we wish
to color, where each object $T\in \Objects$ is a connected subset
of $\treespace$. Note that each such object is itself also a tree.
From now on we refer to the objects in~$\Objects$
as ``trees,'' and always use ``tree space'' when talking about~$\treespace$.
Observe that internal nodes of a tree are necessarily internal nodes
of~$\treespace$, but a tree leaf may lie in
the interior of an edge of~$\treespace$.
We will investigate CF- and NM-chromatic number of trees on tree space
as a function of the following parameters:
\begin{itemize}
\item $k$, the number of leaves of the tree space~$\treespace$;
\item $\ell$, the maximum number of leaves of any tree in $\objects$;
\item $n$, the number of objects in $\objects$.
\end{itemize}
We define the CF-chromatic number~$\CFCN{tree}{trees} (k, \ell; n)$
as the minimum number of colors sufficient to CF-color any set $\Objects$ of $n$ trees of
at most $\ell$ leaves each, in a tree space of at most~$k$ leaves.
The NM-chromatic number~$\NMCN{tree}{trees} (k, \ell; n)$ is defined similarly.
Rows~3 and~4 in Table~\ref{table:overview} give our bounds on these
chromatic numbers. Notice that the upper bounds do not depend on $n$. In other words,
any set of trees in a tree space can be colored with a number of colors
that depends only on the complexity of the tree space~$\treespace$ and 
of the trees in~$\objects$. (Obviously the number of objects, $n$,
is an upper bound on these chromatic numbers as well. To avoid cluttering
the statements, we usually omit this trivial bound.)
\begin{table}
{\small
\begin{center}
\begin{tabular}{l l l r r l}
Space & Objects & \ Coloring & Upper Bound & Lower Bound & \ \ Reference \\[2pt]
\hline \hline \\[-8pt]
Line & Intervals & \ NM & $2$ & $2$ & \ \ Folklore \\[2pt]
\hline \\[-8pt]
Line & Intervals &  \ CF & $3 $ & $3$ & \ \ Folklore \\[2pt]
\hline \hline \\[-8pt]
Tree & Trees & \ NM & $\min\left(\ell +1, 2 \sqrt{6k} \right)$ &
	$\min\left(\ell +1, \left\lfloor \frac{1+ \sqrt{1+8k}}{2}
	\right\rfloor \right)$
	& \ \ Section \ref{sec:gen_subgraph_tree} \\[2pt]
\hline \\[-8pt]
Tree & Trees & \ CF & $O(\ell \log k)$ &
	$\left\lfloor \log_2 \min (k,n) \right\rfloor$
	& \ \ Section \ref{sec:gen_subgraph_tree} \\[2pt]
\hline \hline \\[-8pt]
Tree & Balls & \ NM & $2$ & $2$
	& \ \ Section \ref{sec:balls-on-trees}\\[2pt]
\hline \\[-8pt]
Tree & Balls & \ CF & $\lceil \log t \rceil +3 $
	& $\lceil \log (t+1) \rceil  $
	& \ \ Section \ref{sec:balls-on-trees} \\[2pt]
\hline \hline \\[-8pt]
Planar & Balls & \ NM & $4$ & $4$
	& \ \ Section \ref{sec:balls-on-networks}\\[2pt]
\hline \\[-8pt]
Planar & Balls & \ CF & $\lceil \log_{4/3} t \rceil +3 $ &
	$\lceil \log (t+1) \rceil  $
	& \ \ Section \ref{sec:balls-on-networks} \\[2pt]
\hline
\end{tabular}
\end{center}
\caption{Overview of our results. The folklore result for intervals on the line (that is, in $\Reals^1$) is explained below.} \label{table:overview}
}
\end{table}
We also study balls in tree spaces. Here it turns out to be more convenient
to not use $k$ (the number of leaves) as the complexity measure of $\treespace$,
but
\begin{itemize}
\item $t$, the number of internal nodes of~$\treespace$.
\end{itemize}
We are interested in the chromatic numbers~$\CFCN{tree}{balls} (t; n)$
and $\NMCN{tree}{balls} (t; n)$. Rows~5 and~6 of Table~\ref{table:overview}
state our bounds for these chromatic numbers.

After studying balls in tree spaces, we turn our attention to balls in
planar network spaces. Our bounds on the corresponding chromatic numbers
$\CFCN{planar}{balls} (t; n)$ and $\NMCN{planar}{balls} (t; n)$ 
are contained in row 7 and 8 of Table~\ref{table:overview}.

\mypara{Related results.}
Above we considered CF- and NM-colorings in a geometric setting,
but they can also be defined more abstractly.
A CF-coloring on a hypergraph $\mathcal{H}=(V,E)$ is a coloring of the
vertex set~$V$ such that, for every (non-empty) hyperedge $e\in E$,
there is a vertex in $e$ whose color is different from that of the other
vertices in~$e$. In a NM-coloring any hyperedge with at least two vertices should
not be monochromatic.  Smorodinsky's survey~\cite{S-survey-10}
also gives an overview of results on CF-colorings in this abstract setting.

The basic geometric version mentioned
above---coloring objects in $\Reals^2$ with respect to points---can be phrased in terms
of hypergraphs by letting the objects be the node set $V$
and, for each point $p$ in the plane, creating a hyperedge $e:=S_p$.
Another avenue for constructing a hypergraph~$\mathcal{H}$ to be colored
is to start with a graph~$\graph$, let the vertices of $\mathcal{H}$ be the nodes of $\graph$ and
create hyperedges for (the sets of vertices of) certain subgraphs of~$\graph$.
For example, Pach and Tardos~\cite{PT-cf-09} considered the case
where hyperedges are all the node neighborhoods.
For this case, Abel~\etal~\cite{abel-etal-17} recently showed that a planar
graph can always be CF-colored with only three colors, if we allow
some nodes to be uncolored. (Otherwise, we can use
a dummy color, increasing the number of colors to four.)
As another example, we let the hyperedges be induced by all the paths in the graph. This
setting is equivalent to an older notion of \emph{node ranking} \cite{bod-vertexrank-94},
or \emph{ordered coloring}~\cite{katch-orderedcol-95}.
Note that in the above results the goal is to color the nodes of a graph.
We, on the other hand, do not want to color nodes,
but objects (connected subsets) in a network space (which has a graph topology,
but is a geometric object).

\mypara{Preliminaries: the chain methods.}
We start by describing a folklore technique, called the \emph{chain method}, to
color intervals in~$\Reals^1$ in a non-monochromatic fashion using at most two colors.
We order the intervals left-to-right by their
left endpoints (in case of ties, we take the longest interval first)
and color them in this order using the
so-called \emph{active color} which is defined as follows.
We start with blue as the active color. We color the first interval,
then change the active color to red. We then use the following
procedure: we color the next interval $I$ in the ordering
using the active color,
then if the right endpoint of $I$ is not contained in any
other already colored interval, we change the active color from red to blue
or blue to red.

To obtain a CF-coloring the chain method proceeds as follows.
First, the interval with the leftmost left endpoint---in case of ties,
the longest such interval---is colored blue.
Next, the following procedure is repeated until we get stuck:
Let $I$ be the interval colored last.
Among all intervals whose left endpoint lies in~$I$ and that are not
contained in it, color the one extending farthest to the right red
(if $I$ is blue) or blue (if $I$ is red). This creates a chain
of alternating blue and red intervals.
Each remaining interval is now either completely covered
by the already colored intervals, or it lies completely to the right of them.
The former intervals are given a dummy color (grey),
the latter intervals are colored by applying the above procedure again.

\begin{lemma}\label{lem:chain-methods}
There is a NM-coloring of intervals on a line using two colors,
and a CF-coloring using three colors.
\end{lemma}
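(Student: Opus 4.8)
The plan is to verify that the two procedures described above are in fact correct; since the algorithms are already given, all of the work lies in the analysis. I will treat the NM-coloring and the CF-coloring separately.

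For the NM-coloring I would first reformulate the flipping rule combinatorially. Sort the intervals $I_1, \dots, I_n$ by left endpoint as the procedure prescribes, and write $l_i, r_i$ for the endpoints of $I_i$, so that $l_1 \le \dots \le l_n$. Then the right endpoint $r_i$ of $I_i$ fails to be contained in an earlier interval precisely when $r_i$ exceeds all of $r_1, \dots, r_{i-1}$; that is, the active color flips after $I_i$ exactly when $r_i$ is a strict left-to-right maximum (a \emph{record}) of the sequence of right endpoints. Hence the color of $I_j$ is determined by the parity of the number of records among $I_1, \dots, I_{j-1}$. Now fix a point $p$ lying in at least two intervals and let $I_a, I_b$ be the first two intervals (in sorted order) containing $p$. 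Two observations finish the argument: (i) every interval preceding $I_a$ must end before $p$, since otherwise it would contain $p$ and precede $I_a$, so $r_a$ is a record; and (ii) every interval strictly between $I_a$ and $I_b$ also ends before $p$ (it cannot contain $p$, being neither $I_a$ nor $I_b$), while $r_a \ge p$, so none of these is a record. Thus exactly one record occurs among positions $a, \dots, b-1$, the parity flips once, and $I_a, I_b$ receive different colors, so $S_p$ is non-monochromatic.

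For the CF-coloring the analysis rests on the geometry of a single chain $C_1, \dots, C_h$, for which I would prove three structural facts. First, the right endpoints strictly increase along the chain and consecutive intervals overlap (each $C_{i+1}$ starts inside $C_i$ but reaches past it), so the union of the chain is a contiguous block $[l_{C_1}, r_{C_h}]$. Second---the heart of the argument---intervals two apart in the chain are disjoint: if $C_{i+2}$ started inside $C_i$, it would have been an eligible candidate when $C_{i+1}$ was chosen, and since it reaches farther right than $C_{i+1}$ it would have been selected instead, a contradiction. Consequently every point is covered by at most two chain intervals, and any two that cover it are consecutive and hence oppositely colored. Third, by the termination condition every remaining interval either lies entirely within $[l_{C_1}, r_{C_h}]$ (and is therefore fully covered, receiving the dummy color) or lies strictly to its right and is handled by the recursion on a disjoint region.

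Putting these together, the blocks produced by successive rounds are pairwise disjoint, so any point $p$ with $S_p \neq \emptyset$ lies in exactly one block and all intervals through it were colored in that single round. Within that round $p$ is covered by at least one and at most two chain intervals: if one, its blue/red color is unique in $S_p$ because every other interval through $p$ is grey; if two, they are blue and red and each occurs exactly once, so a color is again unique. Either way $S_p$ contains a uniquely colored interval, which gives a CF-coloring with three colors. I expect the main obstacle to be the CF case, and within it the two claims that every point meets at most two chain intervals and that greyed intervals together with the recursive rounds occupy pairwise disjoint blocks, so that no interaction between rounds can spoil the uniqueness of a color.
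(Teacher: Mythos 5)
Your proposal is correct, and it is both more complete and differently organized than the paper's own proof. The paper proves only the CF case, by a pointwise exchange argument: if a point $p$ lay in two red chain intervals, the blue chain interval between them would start in the first yet extend less far than the second, so the greedy rule would have chosen the second to be blue instead --- a contradiction; the NM case is dismissed as ``similar.'' Your CF argument rests on the same greedy-exchange idea, but you package it as a structural lemma about the chain ($C_i$ and $C_{i+2}$ are disjoint, hence every point meets at most two, necessarily consecutive, chain intervals), and you make explicit two things the paper glosses over: that the blocks of successive rounds are pairwise disjoint, and that grey intervals cannot spoil uniqueness --- the paper's ``it is clear that $p$ is contained in either a red or a blue interval'' silently relies on both. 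Your NM proof via parity of records is a genuine addition, since the paper gives no argument for that case at all. One small imprecision: your third structural fact (every remaining interval lies inside the block $[l_{C_1}, r_{C_h}]$ or strictly to its right) does not follow from the termination condition alone; for an interval starting in $C_i$ with $i<h$ but reaching past $r_{C_h}$ you need the same greedy-maximality exchange as in your second fact, and for the left end you need that $C_1$ has the leftmost left endpoint. Since that is exactly the argument you already spelled out, this is a presentational slip rather than a gap.
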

\begin{proof}
We prove the latter coloring is conflict-free; the proof for
the NM-coloring is similar. Consider a point~$p$
contained in an interval. It is clear that~$p$ is contained in
either a red or a blue interval. We suppose without loss of generality
it is contained in a red interval~$I_0=[a_0,b_0]$. We show it is not contained in
another red interval. Let us suppose by contradiction that it is
contained in another red interval~$I_1=[a_1,b_1]$ with~$a_1\geqslant a_0$.
Then~$p$ must also be contained in a blue interval~$I_2=[a_2,b_2]$,
with~$a_1 \geqslant a_2 \geqslant a_0$. Moreover, we have
that~$b_2 < b_1 $. Thus,~$I_2$ starts in~$I_0$ and extends further
than~$I_1$, hence should have been chosen to be colored blue, which
is a contradiction. Therefore,~$p$ is always contained in at most
one red interval, and similarly, in at most one blue interval,
and is always contained in a blue or in a red interval.
Thus the coloring is conflict-free.
\end{proof}

\section{Trees on Tree Spaces}\label{sec:gen_subgraph_tree}
\subsection{The upper bound}
\mypara{Overview of the coloring procedure.}
Let~$\TreeSpace$ be a tree space with~$k$ leaves and let~$\Objects$ be a
set of~$n$ trees in~$\treespace$, each with at most~$\ell$ leaves.
We describe an algorithm that NM-colors $\objects$ in two phases:
first, we select
a subset~$\core\subseteq \Objects$ of size at most~$6k-12$
and color it with at most~$\min\left(\ell +1, 2 \sqrt{6k} \right)$ colors.
In the second phase we extend this coloring to the whole set~$\objects$
without using new colors.

An edge $e$ of $\treespace$ is a \emph{leaf edge} if it is incident to a leaf; the remaining edges are \emph{internal}.
We define $\core\subseteq \Objects$ as the set of at most~$6k-12$ trees
selected as follows. For every pair $(e,v)$, where $e$ is an edge of~$\TreeSpace$ and
$v$ is an endpoint of $e$ that is not a leaf of~$\TreeSpace$,
we choose two trees containing~$v$ and extending the furthest into~$e$ (if they exist),
that is, trees~$T$ of $\Objects$ containing~$v$ for which~$\mbox{length}(T\cap e)$ is maximal,
and place them in~$\objects(e,v)$.
If two or more trees of $\Objects$ fully contain~$e$, then~$\objects(e,v)$
contains two of them, chosen arbitrarily.
If a tree contains an internal edge~$e$ fully, it may be chosen by
both endpoints. We now
define~$\objects(e):=\objects(e,u) \cup \objects(e,v)$ for each
internal edge~$e=uv$, $\objects(e):= \objects(e,v)$ for each
leaf edge~$e=uv$ with non-leaf endpoint~$v$,
and $\core:= \bigcup \objects(e)$, with the union taken over all edges~$e$ of~$\treespace$.
Then~$\objects(e)$ contains at most four trees for any internal edge~$e$ and
at most two trees for any leaf edge~$e$.
If $\treespace$ has at most $k$ leaves, it has at most~$k$ leaf edges and at most
$k-3$ internal edges; recall that $\treespace$ has no degree-two nodes. Thus $|\core|\leq 6k-12$, as claimed.
We first explain how to color~$\core$.

\mypara{Coloring~$\core$.}
We color~$\core$ in two steps. Let~$T \in \core$ be a tree.
We define $E(T)$ to be the set of edges~$e$ of~$\treespace$
with~$T\in \objects(e)$.
Firstly, if~$\ell > 2 \sqrt{6k}$
we select all subtrees~$T$ with~$|E(T)| \geqslant \sqrt{6k}$,
and give each of them a unique color.
Since $\sum_{e} |\objects(e)| \leq 6k-12$, there are at most~$\sqrt{6k}-1$
such trees, so we use at most~$\sqrt{6k}-1$ colors.
For each uncolored~$T\in\core$, we create a new
tree~$T'$, defined as the smallest tree
containing~$\bigcup_{e\in E(T)} e\cap T$; see Fig.~\ref{fig:trimming-trees}.
$T'$ has at most~$\ell':=\min(\ell, \sqrt{6k})$ leaves
because $|E(T)|<\sqrt{6k}$.
Define~$\core':=\{T'\mid T\in \core \}$.
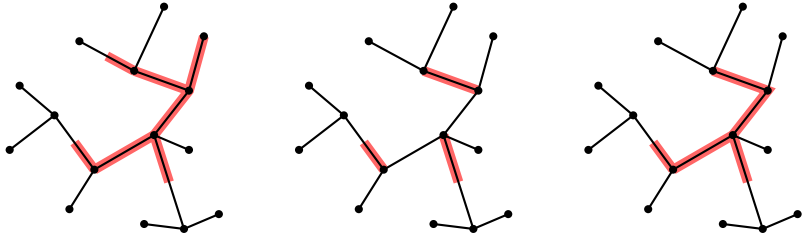
\begin{figure}
\begin{center}
\begin{tikzpicture}[scale=0.655]
\node at (0,0) (1) {};
\node at (1,2) (2) {};
\node at (-0.4,1.3) (3) {};
\node at (-1.2,-0.7) (4) {};
\node at (0.6,-1.9) (5) {};
\node at (-0.2,-1.8) (12) {};
\node at (1.3,-1.6) (13) {};
\node at (0.7,0.9) (6) {};
\node at (0.7,-0.3) (7) {};
\node at (-2,0.4) (8) {};
\node at (-2.9,-0.3) (14) {};
\node at (-2.7,1) (15) {};
\node at (-1.7,-1.5) (9) {};
\node at (-1.5,1.9) (10) {};
\node at (0.2,2.6) (11) {};

\draw[line width=1.3mm, red!60] (1.center) -- (0.3,-0.95);
\draw[line width=1.3mm, red!60] (1.center) -- (4.center);
\draw[line width=1.3mm, red!60] (1.center) -- (6.center) -- (2.center);
\draw[line width=1.3mm, red!60] (3.center) -- (-0.95,1.6);
\draw[line width=1.3mm, red!60] (4.center) -- (-1.6,-0.15);
\draw[line width=1.3mm, red!60] (6.center) -- (3.center);

\draw[thick] (4.center) -- (1.center) -- (5.center);
\draw[thick] (1.center) -- (6.center) -- (2.center);
\draw[thick] (8.center) -- (4.center) -- (9.center);
\draw[thick] (11.center) -- (3.center) -- (10.center);
\draw[thick] (6.center) -- (3.center);
\draw[thick] (7.center) -- (1.center);
\draw[thick] (12.center) -- (5.center) -- (13.center);
\draw[thick] (15.center) -- (8.center) -- (14.center);

\foreach \i in {1,...,15}{
	\draw[thick, fill=black] (\i.center) circle (0.06);
}
\end{tikzpicture}
\qquad
\begin{tikzpicture}[scale=0.655]
\node at (0,0) (1) {};
\node at (1,2) (2) {};
\node at (-0.4,1.3) (3) {};
\node at (-1.2,-0.7) (4) {};
\node at (0.6,-1.9) (5) {};
\node at (-0.2,-1.8) (12) {};
\node at (1.3,-1.6) (13) {};
\node at (0.7,0.9) (6) {};
\node at (0.7,-0.3) (7) {};
\node at (-2,0.4) (8) {};
\node at (-2.9,-0.3) (14) {};
\node at (-2.7,1) (15) {};
\node at (-1.7,-1.5) (9) {};
\node at (-1.5,1.9) (10) {};
\node at (0.2,2.6) (11) {};

\draw[line width=1.3mm, red!60] (1.center) -- (0.3,-0.95);
\draw[line width=1.3mm, red!60] (6.center) -- (3.center);
\draw[line width=1.3mm, red!60] (4.center) -- (-1.6,-0.15);

\draw[thick] (4.center) -- (1.center) -- (5.center);
\draw[thick] (1.center) -- (6.center) -- (2.center);
\draw[thick] (8.center) -- (4.center) -- (9.center);
\draw[thick] (11.center) -- (3.center) -- (10.center);
\draw[thick] (6.center) -- (3.center);
\draw[thick] (7.center) -- (1.center);
\draw[thick] (12.center) -- (5.center) -- (13.center);
\draw[thick] (15.center) -- (8.center) -- (14.center);

\foreach \i in {1,...,15}{
	\draw[thick, fill=black] (\i.center) circle (0.06);
}
\end{tikzpicture}
\qquad
\begin{tikzpicture}[scale=0.655]
\node at (0,0) (1) {};
\node at (1,2) (2) {};
\node at (-0.4,1.3) (3) {};
\node at (-1.2,-0.7) (4) {};
\node at (0.6,-1.9) (5) {};
\node at (-0.2,-1.8) (12) {};
\node at (1.3,-1.6) (13) {};
\node at (0.7,0.9) (6) {};
\node at (0.7,-0.3) (7) {};
\node at (-2,0.4) (8) {};
\node at (-2.9,-0.3) (14) {};
\node at (-2.7,1) (15) {};
\node at (-1.7,-1.5) (9) {};
\node at (-1.5,1.9) (10) {};
\node at (0.2,2.6) (11) {};

\draw[line width=1.3mm, red!60] (1.center) -- (0.3,-0.95);
\draw[line width=1.3mm, red!60] (1.center) -- (4.center);
\draw[line width=1.3mm, red!60] (1.center) -- (6.center) -- (3.center);
\draw[line width=1.3mm, red!60] (4.center) -- (-1.6,-0.15);

\draw[thick] (4.center) -- (1.center) -- (5.center);
\draw[thick] (1.center) -- (6.center) -- (2.center);
\draw[thick] (8.center) -- (4.center) -- (9.center);
\draw[thick] (11.center) -- (3.center) -- (10.center);
\draw[thick] (6.center) -- (3.center);
\draw[thick] (7.center) -- (1.center);
\draw[thick] (12.center) -- (5.center) -- (13.center);
\draw[thick] (15.center) -- (8.center) -- (14.center);

\foreach \i in {1,...,15}{
	\draw[thick, fill=black] (\i.center) circle (0.06);
}
\end{tikzpicture}
\end{center}
\caption{The original tree~$T$ (left), the set~$\bigcup_{e\in E(T)} e\cap T$ (middle),
         and the new tree~$T'$ (right). }\label{fig:trimming-trees}
\end{figure}

The second step is to color~$\core'$.
We need the following lemma, which shows that
an NM-coloring of~$\core'$ carries over to~$\core$.

\begin{lemma}\label{lem:c'_to_c}
Any NM-coloring of~$\core'$ corresponds to an NM-coloring of~$\core$,
that is, if we give each tree~$T\in \core$ the color of the corresponding tree
$T'\in \core'$ then we obtain an NM-coloring.
\end{lemma}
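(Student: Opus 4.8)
The plan is to argue pointwise and reduce the statement to a purely combinatorial coverage claim. Fix a point $p \in \treespace$ that is contained in at least two trees of $\core$; I must show that the induced coloring is not monochromatic on $S_p \cap \core$. The basic observation that drives everything is that trimming only shrinks a tree: since $T'$ is the smallest subtree of $\treespace$ containing a union of subsets of the connected set $T$, we have $T' \subseteq T$, and in particular $p \in T'$ implies $p \in T$. Hence it suffices to prove the following coverage claim:
\[
(\star)\qquad \text{at least two trees } T \in \core \text{ satisfy } p \in T'.
\]

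Granting $(\star)$, the conclusion is immediate. If one of the trees covering $p$ received a unique color in the first coloring step, then that color is distinct from every other color, so $S_p \cap \core$ is already non-monochromatic and there is nothing to prove. Otherwise every tree covering $p$ was trimmed, so the two trees produced by $(\star)$ have their trimmed versions in $\core'$; both of these trimmed trees contain $p$, so the assumed NM-coloring of $\core'$ assigns them two distinct colors, and since both lie in $S_p \cap \core$ this set receives at least two colors. Thus the crux is $(\star)$.

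To prove $(\star)$, first suppose $p$ lies in the interior of an edge $e = uv$. I will use that for any $T \in \core$ the set $T \cap e$ is a subinterval of $e$ (the tree-geodesic between two points of $e$ stays in $e$, and $T$ contains all such geodesics), and that this subinterval must touch $u$ or $v$: otherwise $T$ would lie in the interior of $e$, contain no node of $\treespace$, and hence be selected for no edge, contradicting $T \in \core$. Consequently every tree covering $p$ reaches $u$ or $v$. The key point is that the (up to two) trees of $\objects(e,v)$ selected for extending farthest into $e$ from $v$ retain their entire $e$-portion in their trim, because $e \in E(T)$ for them, and they penetrate at least as deeply into $e$ as any covering tree reaching $v$, so they too contain $p$. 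A short case analysis then yields $(\star)$: if at least two covering trees reach $v$, the two farthest-reaching ones are distinct, are selected, contain $p$, and keep $p$ in their trim; symmetrically for $u$; and in the only remaining case exactly one covering tree reaches $v$ and exactly one reaches $u$, each being the deepest on its side and hence selected, again producing two distinct trees with $p$ in their trim.

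The case in which $p$ is a node $v$ of $\treespace$ is handled in the same spirit: among the trees of $\core$ containing $v$, either some incident edge is entered by two of them, so its two deepest (selected) trees keep $v$ in their trim, or else the trees enter pairwise distinct incident edges and each is the unique, hence deepest, enterer of its own edge, so at least two of them are selected and retain $v$ in their trim. I expect the main obstacle to be precisely this bookkeeping inside $(\star)$: one must verify carefully that the trees preserved by the selection rule (the two deepest per edge–endpoint pair) are exactly those whose trimmed versions still cover $p$, and that one can always exhibit \emph{two distinct} such trees, including at branch nodes and in the degenerate one-tree-per-side situation.
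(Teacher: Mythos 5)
Your proposal is correct and follows essentially the same route as the paper's own proof: both reduce the lemma to the claim that any point $p\in e$ covered by two trees of $\core$ is covered by two trees of $\objects(e)$, whose trimmed versions still contain $p$ because $e\in E(T)$ forces the trim to retain $e\cap T$, with uniquely-colored trees handled separately. The only difference is that you spell out (via the case analysis on which endpoint of $e$ the covering trees reach) the coverage claim that the paper asserts in a single sentence.
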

\begin{proof}
Let~$q$ be a point on an edge~$e$ of~$\TreeSpace$ contained in at least two trees
of~$\core$ (if no such trees exists, the coloring is trivially non-monochromatic at~$q$).
Since~$q$ is contained in at least two trees of~$\core$, it is
also contained in two trees of~$\objects(e)$. Call these trees $T_1$ and~$T_2$.
Note that $T_1$ either receives a color in the first coloring step---namely,
when $\ell> 2\sqrt{6k}$ and $|E(T_1)|\geq\sqrt{k}$---or $T'_1\in\core'$ contains~$q$, since $e\in E(T_1)$.
A similar statement holds for~$T_2$. Since the colors used in the first step
are unique and $\core'$ is NM-colored, this implies that $T_1$ and $T_2$ have different colors.
Hence,~$\core$ is NM-colored.
\end{proof}

Next we show how to NM-color~$\core'$. Fix an arbitrary internal
node~$r$ of $\treespace$ and treat $\treespace$ as rooted at $r$.. Our coloring procedure for~$\core'$ maintains
the following invariant: any path from~$r$ to a leaf~$v$ of~$\treespace$ consists
of three disjoint consecutive subpaths (some possibly empty), in this order, as illustrated in Fig.~\ref{fig:3-paths}:
	\begin{itemize}

	\item a \emph{non-monochromatic} subpath containing the root on which at least
	two trees are colored with at least two different colors,

	\item a \emph{singly-colored} subpath covered by exactly one colored tree, and

	\item an \emph{uncolored} subpath containing the leaf on which no tree is
	colored.
	
	\end{itemize}
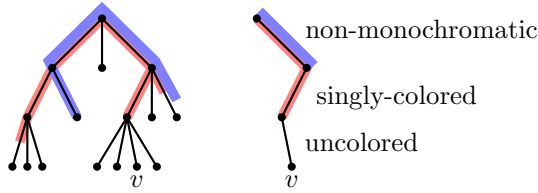
\begin{figure}
\begin{center}
\begin{tikzpicture}[scale=0.655]
\node at (0,0) (0) {};

\node at (1,-1) (1) {};
\node at (-1,-1) (2) {};
\node at (0,-1) (15) {};

\node at (-1.5,-2) (3) {};
\node at (-0.5,-2) (4) {};
\node at (0.5,-2) (5) {};
\node at (1.5,-2) (6) {};
\node at (1,-2) (13) {};

\node at (-1.8,-3) (7) {};
\node at (-1.5,-3) (8) {};
\node at (-1.2,-3) (14) {};
\node at (-0.1,-3) (9) {};
\node at (0.3,-3) (10) {};
\node at (0.7,-3) (11) {};
\node at (1.1,-3) (12) {};

\node at ([yshift=-3mm]11.center) {$v$};

\draw[line width=1.3mm, red!50] (0.center) -- (1.center) -- (5.center);
\draw[line width=1.3mm, red!50] (0.center) -- (2.center) -- (3.center)
								-- (-1.65,-2.5);
\draw[line width=1.3mm, red!50] (1.center) -- (1.25,-1.5);

\draw[line width=1.3mm, blue!50] (4.center)
				-- ([xshift=-0.7mm,yshift=1.4mm]2.center)
				-- ([yshift=2mm]0.center)
				-- ([xshift=0.7mm,yshift=1.4mm]1.center)
				-- ([yshift=3.5mm]6.center);

\foreach \i in {0,...,15}{
	\draw[thick, fill=black ] (\i) circle (0.06);
}

\draw[thick] (0.center) -- (1.center);
\draw[thick] (0.center) -- (2.center);
\draw[thick] (0.center) -- (15.center);

\draw[thick] (1.center) -- (5.center);
\draw[thick] (1.center) -- (6.center);
\draw[thick] (2.center) -- (3.center);
\draw[thick] (2.center) -- (4.center);
\draw[thick] (1.center) -- (13.center);

\draw[thick] (3.center) -- (7.center);
\draw[thick] (3.center) -- (8.center);
\draw[thick] (5.center) -- (9.center);
\draw[thick] (5.center) -- (10.center);
\draw[thick] (5.center) -- (11.center);
\draw[thick] (5.center) -- (12.center);
\draw[thick] (3.center) -- (14.center);
\end{tikzpicture}
\qquad
\begin{tikzpicture}[scale=0.655]
\node at (0,0) (0) {};
\node at (1,-1) (1) {};
\node at (0.5,-2) (4) {};
\node at (0.7,-3) (10) {};

\node at (3.3,-0.2) {non-monochromatic};
\node at (2.8,-1.6) {singly-colored};
\node at (2.1,-2.5) {uncolored};

\node at ([yshift=-3mm]10.center) {$v$};

\draw[line width=1.3mm, red!50] (0.center) -- (1.center) -- (4.center);

\draw[line width=1.3mm, blue!50] ([xshift=0.5mm,yshift=1.5mm]0.center)
				-- ([xshift=1.5mm,yshift=0.5mm]1.center);

\foreach \i in {0,1,4,10}{
	\draw[thick, fill=black ] (\i) circle (0.06);
}

\draw[thick] (0.center) -- (1.center) -- (4.center) -- (10.center);
\end{tikzpicture}
\end{center}
\caption{A coloring of trees (left) and an illustration of the invariant for~$v$ (right). }\label{fig:3-paths}
\end{figure}

\begin{observation}
  Any set of trees containing~$r$ and satisfying the invariant described
  above is NM-colored if we disregard uncolored trees.
\end{observation}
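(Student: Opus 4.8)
The plan is to verify the defining property of an NM-coloring pointwise, taking advantage of the fact that the three-part decomposition of the invariant is governed entirely by how many colored trees, and how many distinct colors, cover a given point. So I fix an arbitrary point $q\in\treespace$ and let $P$ be any path from the root $r$ to a leaf $v$ of $\treespace$ passing through $q$; such a path always exists. I must show that if at least two colored trees contain $q$, then they do not all share a single color.

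The key structural fact to establish first is that each tree in the set is a connected subset containing $r$, so its intersection with $P$ is an initial segment of $P$ starting at $r$. Hence, if a colored tree $T$ contains a point $x\in P$, it contains the entire subpath of $P$ from $r$ to $x$. Consequently, as one moves along $P$ away from $r$, the set of colored trees covering the current point can only shrink, and so can the set of colors present. This monotonicity is precisely what makes the three consecutive subpaths well defined: the number of colors present is non-increasing from $r$ toward $v$, passing from ``at least two'' on the non-monochromatic subpath, to exactly one on the singly-colored subpath, to none on the uncolored subpath.

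Given this, the argument reduces to a short case analysis on the location of $q$ along $P$. If $q$ lies on the uncolored subpath, no colored tree contains it and there is nothing to verify. If $q$ lies on the singly-colored subpath, exactly one colored tree contains it, so the NM-condition holds vacuously, as it only constrains points covered by at least two objects. Therefore, whenever $q$ is covered by two or more colored trees, it must lie on the non-monochromatic subpath; by the invariant, together with the monotonicity above, the colored trees containing $q$ use at least two distinct colors, so the set of colored trees at $q$ is not monochromatic. Since $q$ was arbitrary, the coloring is non-monochromatic once uncolored trees are disregarded.

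The one step I would flag as requiring care is justifying that the non-monochromatic subpath is genuinely non-monochromatic at \emph{each} of its points, rather than merely containing two differently colored trees somewhere along it. This is exactly where the hypothesis that every tree contains $r$ is essential: via the initial-segment property, any colored tree reaching a point farther from $r$ than $q$ also covers $q$, so color diversity present at farther points is transferred back to $q$ itself. Everything else is routine once this monotonicity is in place.
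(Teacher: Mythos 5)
Your core case analysis is exactly why the paper states this as an \emph{observation} with no accompanying proof: under the intended reading of the invariant, in which \emph{every} point of the non-monochromatic subpath is covered by at least two trees of at least two distinct colors, the statement is immediate. A point $q$ lies in one of the three parts of a root-to-leaf path through it; in the uncolored and singly-colored parts the NM-condition holds vacuously, and in the non-monochromatic part it holds by definition. Your initial-segment and monotonicity facts are correct (a connected subset of a tree space is geodesically convex, so a colored tree containing $r$ meets the path in an initial segment), but under this reading they are not needed.

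The problem is the step you yourself flag as the crux. You try to upgrade the weak reading of the invariant (``two differently colored trees exist somewhere along the non-monochromatic subpath'') to the pointwise statement via the transfer argument: color diversity at points farther from $r$ than $q$ transfers back to $q$. That implication is sound --- it shows that diversity at the \emph{far endpoint} of the subpath propagates to all of it --- but its premise is not supplied by the weak reading: the two differently colored trees could both end strictly before $q$, so the diversity may live only on the portion of the subpath \emph{nearer} to $r$, from which nothing transfers forward. Concretely, let the space be a single segment rooted at its endpoint $r$, with a blue tree $[r,a]$ and two red trees $[r,b]$ and $[r,c]$, where $a<b<c$; declaring $[r,b]$ the non-monochromatic part, $(b,c]$ the singly-colored part and the rest uncolored satisfies the weak reading, yet every point of $(a,b)$ is covered by exactly two trees, both red. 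So under the weak reading the observation is simply false, and no argument can close this gap; the ambiguity must be resolved by adopting the pointwise reading. That this is the intended one is confirmed by the maintenance proof of Lemma~\ref{lem:c'}: when a tree $T$ is colored, the previously singly-colored part becomes non-monochromatic at every point, because $T$'s color is chosen different from the unique color present there. With that reading fixed, your transfer step is unnecessary and the rest of your argument stands.
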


We color the trees $T\in\core'$ that contain~$r$ in an arbitrary order, using~$\ell'+1$
colors, as follows: for each leaf~$v$ of $T$, we follow the path from~$v$ to
the root~$r$ to find a singly-colored part. Note that if we find a singly-colored
part---by the invariant there is at most one such part on the path from $v$ to $r$---we
cannot use that color for~$T$. Since $T$ has at most $\ell'$ leaves, this eliminates at most $\ell'$ colors. Hence, at least one color remains for~$T$.

\begin{lemma}\label{lem:c'}
The procedure described above maintains the invariant and colors
all trees of~$\core'$ containing~$r$ with at most~$\ell'+1$ colors.
\end{lemma}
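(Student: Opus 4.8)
The plan is to prove the two assertions separately: the color bound follows directly from the counting sketched just before the lemma, while the preservation of the invariant is proved by induction on the number of already-colored trees of~$\core'$ that contain~$r$.

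For the color bound I would argue as follows. A tree $T\in\core'$ has at most $\ell'$ leaves, and for each leaf~$v$ the path from~$v$ to~$r$ has, by the invariant, at most one singly-colored subpath, hence forbids at most one color. Summing over the leaves of~$T$, at most $\ell'$ colors are excluded, so at least one of the $\ell'+1$ colors remains available; this shows the procedure never gets stuck. The real work lies in maintaining the invariant. The base case (no tree colored) is immediate, since every root-to-leaf path is then entirely uncolored. The key structural observation for the inductive step is that \emph{every} colored tree contains~$r$, so its intersection with any root-to-leaf path~$\pi$ is a prefix $[r,\cdot]$ of~$\pi$. Consequently, the number $N(x)$ of colored trees covering a point~$x$ is non-increasing as~$x$ moves from~$r$ toward the leaf, which already gives the three consecutive subpaths for free: the non-monochromatic part is $\{x:N(x)\geq 2\}$, the singly-colored part is $\{x:N(x)=1\}$, and the uncolored part is $\{x:N(x)=0\}$. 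Thus the only genuine content of the invariant is that, at every point with $N(x)\geq 2$, the covering trees exhibit at least two distinct colors.

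To complete the step, I fix a path~$\pi$ with old parts $[r,a]$ (non-monochromatic), $[a,b]$ (singly-colored, covered by the unique tree~$T_0$ of color~$\gamma$), and $[b,w]$ (uncolored), and let~$p$ be the extent on~$\pi$ of the newly colored tree~$T$, whose color is~$\chi$. If $p\leq a$, then~$T$ only increments the count inside the old non-monochromatic part, which stays non-monochromatic because we only add a tree, so the invariant is preserved with no condition on~$\chi$. If $p>a$, then the region $(a,\min(p,b)]$ passes from one covering tree to two, namely~$T_0$ and~$T$, and the entire argument reduces to proving $\chi\neq\gamma$.

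The main obstacle is precisely this equality-avoidance. I would exhibit a leaf~$v$ of~$T$ whose path to~$r$ has~$\gamma$ as its singly-colored color, so that the color-selection rule forced $\chi\neq\gamma$. Choose~$v$ to be a leaf of~$T$ whose $r$-to-$v$ path agrees with~$\pi$ up to~$p$; such a leaf exists because $p$ is the furthest extent of~$T$ along~$\pi$. The delicate case is when~$T$ branches at~$p$ so that~$v$'s path diverges from~$\pi$ beyond~$p$: here I argue that any colored tree covering a point just past~$p$ on~$v$'s path must, being a root-containing prefix, also cover the points of $(a,b]$; since~$T_0$ is the only colored tree covering $(a,b]$, the singly-colored color of~$v$'s path is again~$\gamma$. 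This yields $\chi\neq\gamma$, so the newly doubly-covered region $(a,\min(p,b)]$ is non-monochromatic. Finally, if $p>b$ the stretch $(b,p]$ becomes a fresh singly-colored subpath covered only by~$T$, and the old non-monochromatic part $[r,a]$ keeps at least two colors since no tree was removed; hence all three parts satisfy the required conditions and the invariant is restored.
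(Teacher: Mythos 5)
Your proof is correct and takes essentially the same route as the paper's: you analyze each root-to-leaf path separately, use the fact that every colored tree (containing $r$) intersects such a path in a prefix, and show that the uncolored part becomes singly-colored, the singly-colored part becomes non-monochromatic, and the non-monochromatic part persists. If anything, your argument is more complete than the paper's own proof, which simply asserts ``we eliminated that color for $T$'': you supply the missing justification that the color $\gamma$ of the singly-colored part on the ambient path is exactly the color found, and hence forbidden, along the path from a leaf of $T$ descending past its furthest extent $p$, since any previously colored tree covering a point beyond $p$ is a root-containing prefix and must therefore coincide with $T_0$.
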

\begin{proof}
Suppose the invariant holds before the coloring of~$T$. Then we need
to make sure the invariant still holds after~$T$ has been colored.
Let~$w$ be a leaf of~$\treespace$ and~$\pi_w$ the path from~$w$
to the root. Let~$v$ be the closest point to~$w$ in~$\pi_w \cap T$.
Note that~$v$ always exists as~$r \in \pi_w \cap T$.
Now let $\pi_v\subseteq \pi_w$ be the path from~$v$ to~$r$.
It is obvious that~$\pi_w \cap T = \pi_v$.
Then the part of~$\pi_v$ that was uncolored (if it was non-empty) now
is singly-colored. The part that was singly-colored now becomes
non-monochromatic, as we eliminated that color for~$T$. And
the part that was already non-monochromatic stays so.
Therefore the invariant is indeed maintained for~$\pi_w$, concluding
the proof.
\end{proof}

Once all the trees containing~$r$ are colored we delete~$r$ from~$\treespace$, that is, we
consider the space $\treespace\setminus\{r\}$, and we take the closures of the resulting
connected components.
This creates a number of subspaces such that each
uncolored tree in $\core'$ is contained in exactly one of them.
Consider such a subspace $\treespace'$ and let $r'$ be the neighbor of $r$ in
$\treespace'$. We now want to recursively color the uncolored trees in $\treespace'$,
taking~$r'$ as the root of $\treespace'$. However, the invariant might not
hold on the edge~$e$ from~$r'$ to the old root~$r$:
Since now~$r$ is considered a child of~$r'$, the order of the
three parts might switch on~$e$---see
Fig.~\ref{fig:invariant-order-switch}. Suppose this is the
case, and let~$c_e$ be the color of the singly-colored part
on the edge~$e$.
(If the singly-colored part is empty,
we can cut the tree between the non-monochromatic and the uncolored
part and recurse immediately, which maintains the invariant.)
Note also that, for the order to switch, the
non-monochromatic part needs to end on~$e$, and therefore
the only color used in any singly-colored part of the tree
rooted at~$r'$ is~$c_e$.
We overcome this problem by carefully choosing the order in which we color the
trees containing~$r'$. Namely, we fist color the tree~$T$ extending
the farthest into~$e$. In this case, there is only one color
forbidden, namely~$c_e$. We can therefore easily color~$T$.
We can then trim the treespace $\treespace'$ to remove any
non-monochromatic and singly-colored part and
hence restore the invariant and continue with the coloring.
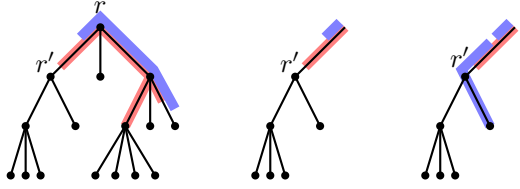
\begin{figure}
\begin{center}
\begin{tikzpicture}[scale=0.655]

\node at (0,0) (0) {};

\node at (1,-1) (1) {};
\node at (-1,-1) (2) {};
\node at (-0.3,-0.3) (2b) {};
\node at (-0.8,-0.8) (2c) {};
\node at (0,-1) (15) {};

\node at (-1.5,-2) (3) {};
\node at (-0.5,-2) (4) {};
\node at (0.5,-2) (5) {};
\node at (1.5,-2) (6) {};
\node at (1,-2) (13) {};

\node at (-1.8,-3) (7) {};
\node at (-1.5,-3) (8) {};
\node at (-1.2,-3) (14) {};
\node at (-0.1,-3) (9) {};
\node at (0.3,-3) (10) {};
\node at (0.7,-3) (11) {};
\node at (1.1,-3) (12) {};

\node at ([yshift=4mm]0.center) {$r$};
\node at ([xshift=-1mm, yshift=3mm]2.center) {$r'$};

\draw[line width=1.3mm, red!50] (0.center) -- (1.center) -- (5.center);
\draw[line width=1.3mm, red!50] (2c.center)
				-- (0.center);
\draw[line width=1.3mm, red!50] (1.center) -- (1.25,-1.5);

\draw[line width=1.3mm, blue!50]
				([xshift=-1mm,yshift=1mm]2b.center)
				-- ([yshift=2mm]0.center)
				-- ([xshift=0.7mm,yshift=1.4mm]1.center)
				-- ([yshift=3.5mm]6.center);

\foreach \i in {0,...,15}{
	\draw[thick, fill=black ] (\i) circle (0.06);
}

\draw[thick] (0.center) -- (1.center);
\draw[thick] (0.center) -- (2.center);
\draw[thick] (0.center) -- (15.center);

\draw[thick] (1.center) -- (5.center);
\draw[thick] (1.center) -- (6.center);
\draw[thick] (2.center) -- (3.center);
\draw[thick] (2.center) -- (4.center);
\draw[thick] (1.center) -- (13.center);

\draw[thick] (3.center) -- (7.center);
\draw[thick] (3.center) -- (8.center);
\draw[thick] (5.center) -- (9.center);
\draw[thick] (5.center) -- (10.center);
\draw[thick] (5.center) -- (11.center);
\draw[thick] (5.center) -- (12.center);
\draw[thick] (3.center) -- (14.center);
\end{tikzpicture}
\qquad
\begin{tikzpicture}[scale=0.655]
\node at (0,0) (0) {};

\node at (-1,-1) (2) {};
\node at (-0.3,-0.3) (2b) {};
\node at (-0.8,-0.8) (2c) {};
\node at (0,-1) (15) {};

\node at (-1.5,-2) (3) {};
\node at (-0.5,-2) (4) {};

\node at (-1.8,-3) (7) {};
\node at (-1.5,-3) (8) {};
\node at (-1.2,-3) (14) {};

\node at ([yshift=4mm]0.center) {};
\node at ([xshift=-1mm, yshift=3mm]2.center) {$r'$};

\draw[line width=1.3mm, red!50] (2c.center)
				-- (0.center);

\draw[line width=1.3mm, blue!50]
				([xshift=-1mm,yshift=1mm]2b.center)
				-- ([xshift=-1mm, yshift=1mm]0.center);

\foreach \i in {2,3,4,7,8,14}{
	\draw[thick, fill=black ] (\i) circle (0.06);
}

\draw[thick] (0.center) -- (2.center);

\draw[thick] (2.center) -- (3.center);
\draw[thick] (2.center) -- (4.center);

\draw[thick] (3.center) -- (7.center);
\draw[thick] (3.center) -- (8.center);
\draw[thick] (3.center) -- (14.center);
\end{tikzpicture}
\qquad
\begin{tikzpicture}[scale=0.655]
\node at (0,0) (0) {};

\node at (-1,-1) (2) {};
\node at (-0.3,-0.3) (2b) {};
\node at (-0.8,-0.8) (2c) {};
\node at (0,-1) (15) {};

\node at (-1.5,-2) (3) {};
\node at (-0.5,-2) (4) {};

\node at (-1.8,-3) (7) {};
\node at (-1.5,-3) (8) {};
\node at (-1.2,-3) (14) {};

\node at ([yshift=4mm]0.center) {};
\node at ([xshift=-1mm, yshift=4mm]2.center) {$r'$};

\draw[line width=1.3mm, red!50] (2c.center)
				-- (0.center);

\draw[line width=1.3mm, blue!50]
				([xshift=-1mm,yshift=1mm]2b.center)
				-- ([xshift=-1mm, yshift=1mm]0.center);
				
\draw[line width=1.3mm, blue!50] (4.center)
				-- ([xshift=-0.7mm,yshift=1.4mm]2.center)
				-- ([xshift=-2mm,yshift=0mm]2b.center);

\foreach \i in {2,3,4,7,8,14}{
	\draw[thick, fill=black ] (\i) circle (0.06);
}

\draw[thick] (0.center) -- (2.center);

\draw[thick] (2.center) -- (3.center);
\draw[thick] (2.center) -- (4.center);

\draw[thick] (3.center) -- (7.center);
\draw[thick] (3.center) -- (8.center);
\draw[thick] (3.center) -- (14.center);
\end{tikzpicture}
\qquad
\begin{tikzpicture}[scale=0.655]
\node at (0,0) (0) {};

\node at (-1,-1) (2) {};
\node at (-0.3,-0.3) (2b) {};
\node at (-0.8,-0.8) (2c) {};
\node at (0,-1) (15) {};

\node at (-1.5,-2) (3) {};
\node at (-0.5,-2) (4) {};

\node at (-1.8,-3) (7) {};
\node at (-1.5,-3) (8) {};
\node at (-1.2,-3) (14) {};

\node at ([yshift=4mm]0.center) {};
\node at ([xshift=-1mm, yshift=4mm]2.center) {$r'$};
				
\draw[line width=1.3mm, blue!50] (4.center)
				-- ([xshift=-0.7mm,yshift=1.4mm]2.center)
				-- ([xshift=-0.9mm,yshift=1.2mm]2c.center);

\foreach \i in {2,3,4,7,8,14}{
	\draw[thick, fill=black ] (\i) circle (0.06);
}

\draw[thick] (2c.center) -- (2.center);

\draw[thick] (2.center) -- (3.center);
\draw[thick] (2.center) -- (4.center);

\draw[thick] (3.center) -- (7.center);
\draw[thick] (3.center) -- (8.center);
\draw[thick] (3.center) -- (14.center);
\end{tikzpicture}
\end{center}
\caption{When recursing on the subspace rooted at~$r'$ (leftmost), the invariant
does not hold anymore (middle left),
as the parts are switched on the edge
between~$r$ and~$r'$. To remedy this, we first color the tree extending
the farthest into that edge (middle right), starting from~$r'$.
We then trim the tree to fix the invariant (rightmost). }
\label{fig:invariant-order-switch}
\end{figure}

\begin{lemma}\label{lem:c}
$\core$ admits an NM-coloring with~$\min (\ell+1, 2\sqrt{6k})$ colors.
\end{lemma}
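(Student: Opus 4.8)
The plan is to assemble the pieces developed above into a single coloring of $\core$ and then bound the palette. By Lemma~\ref{lem:c'_to_c} it suffices to produce a non-monochromatic coloring of $\core'$: the trees that receive a unique color in the first step are already accounted for by that lemma, and their colors, being unique, are disjoint from everything used on $\core'$. Hence the heart of the argument is to show that the recursive procedure terminates, reuses a single palette of $\ell'+1$ colors across all levels of the recursion, and keeps the three-part invariant of Fig.~\ref{fig:3-paths} globally valid; the Observation then immediately gives that the resulting coloring of $\core'$ is non-monochromatic.

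First I would set up an induction on the number of internal nodes of the (sub)tree space, which strictly decreases at each recursive call since deleting the current root removes one internal node. At each level I color all trees containing the current root exactly as in the procedure analyzed by Lemma~\ref{lem:c'}, which uses at most $\ell'+1$ colors and preserves the invariant on every root-to-leaf path. I then delete the current root, split the space into subspaces, and recurse, noting that each still-uncolored tree lies in a unique subspace, so the recursive calls act on disjoint sets of trees and the same palette can be reused without interaction except through the edges incident to the deleted root. The point requiring care is that the invariant, valid for the parent root, must be re-established for each child root $r'$ before the recursive call, because on the edge $e=r'r$ the order of the three parts may be reversed, as in Fig.~\ref{fig:invariant-order-switch}.

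The main obstacle is precisely this re-establishment of the invariant across a recursion boundary, and I would handle it as described preceding the statement. If the singly-colored part on $e$ is empty, I cut $e$ between its non-monochromatic and uncolored parts and recurse at once. Otherwise, writing $c_e$ for the color of the singly-colored part, I use that for the order to have switched the non-monochromatic part must terminate inside $e$, whence $c_e$ is the only color occurring on any singly-colored part of the subspace rooted at $r'$; I therefore color the tree reaching farthest into $e$ first, for which only $c_e$ is forbidden and which is thus always colorable, and then trim away the newly non-monochromatic and singly-colored portions to restore the invariant. Since only $c_e$ is ever forbidden at this preliminary step, and by Lemma~\ref{lem:c'} at most $\ell'$ colors are forbidden for any later tree in the subspace, the same $\ell'+1$ colors suffice at every level, so the recursion introduces no new colors. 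Together with the invariant and the Observation, this shows that $\core'$, and hence $\core$, is non-monochromatically colored.

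It remains to count colors, where I treat the bound as the better of two strategies, each yielding a valid NM-coloring. The first skips the initial step, trims every tree, and colors $\core'$; since each $T'$ is a connected subtree of $T$ it has at most $\ell$ leaves, so at most $\ell+1$ colors are used. The second performs the initial step, assigning unique colors to the at most $\sqrt{6k}-1$ trees with $|E(T)|\ge\sqrt{6k}$ (using $\sum_{e}|\objects(e)|\le 6k-12$), and then trims the rest so that each $T'$ has at most $\ell'=\sqrt{6k}$ leaves and the recursion adds at most $\sqrt{6k}+1$ colors, for a total of at most $2\sqrt{6k}$. As both give NM-colorings of $\core'$, and hence of $\core$ by Lemma~\ref{lem:c'_to_c}, choosing the smaller yields at most $\min(\ell+1,2\sqrt{6k})$ colors, which is exactly the bound asserted.
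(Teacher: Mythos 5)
Your proposal is correct and follows essentially the same route as the paper: it reduces the problem to $\core'$ via Lemma~\ref{lem:c'_to_c}, invokes the invariant-maintaining procedure of Lemma~\ref{lem:c'} together with the order-switch repair on the edge to the old root, and then bounds the palette by the same two cases (skip the unique-color step to get $\ell+1$, or perform it to get $(\sqrt{6k}-1)+(\sqrt{6k}+1)=2\sqrt{6k}$). Your explicit observation that a connected subtree $T'\subseteq T$ has at most $\ell$ leaves even when the first step is skipped is a slightly more careful justification of the bound $\ell'+1\leq\ell+1$ than the paper gives, but the argument is otherwise the same.
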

\begin{proof}
The fact that the procedure above produces an NM-coloring follows from
Lemmas~\ref{lem:c'_to_c} and~\ref{lem:c'}. When $\ell>2\sqrt{6k}$ we use
$\sqrt{6k}-1$ colors to deal with trees $T$ with $|E(T)|\geq\sqrt{6k}$ and
$\ell'+1\leq \min(\ell,2\sqrt{6k})+1\leq \sqrt{6k}+1$ colors for the other trees, giving
$2\sqrt{6k}$ colors in total. When $\ell\leq 2\sqrt{6k}$ we do not treat
the trees with $|E(T)|\geq\sqrt{6k}$ separately, so we just use
$\ell'+1\leq \min(\ell,\sqrt{6k})+1\leq \ell+1$ colors.
  \end{proof}

\mypara{Extending the coloring from~$\core$ to~$\objects$.}
Let~$\col\colon\core \to \naturals$ be an NM-coloring on~$\core$.
We extend the coloring to~$\objects$ as follows. We start by
coloring all trees in~$\objects \setminus \core$ containing
an internal node of $\treespace$ using
an arbitrary color already used. We then treat all edges in an
abritrary order, coloring all trees
contained in the edge as explained now.

Let~$e=rr'$ be an arbitrary edge of~$\treespace$ and~$\objects^*(e)$
be the set of uncolored trees contained in~$e$. We color~$\objects^*(e)$
as follows. We first color the set of uncolored trees contained
in~$e$ naively using the chain method.
For this we use two new colors, which are used for all chains---we
can re-use
the same two colors for the chains, since trivially the chains
in any two edges $e,e'$ do not interact.
However, we can avoid using two extra colors and re-use the colors
from~$\core$ as explained next.

First, if~$\col$ uses fewer than two colors,
then each node of~$\treespace$ is contained in at most one tree.
We then forget the trivial coloring~$\col$ and use the chain method
from scratch on~$\objects$. We start at a arbitrarily
fixed leaf~$u$ of~$\treespace$, and
for any other leaf~$u'$, we consider the path between~$u$ and~$u'$
and use the chain method on the trees restricted to this path.
Since for any node~$v$, at most one tree contains~$v$, no tree
receives two different colors on two different paths.
Moreover, the
coloring is conflict-free, since any point in~$\treespace$ is contained
in a path from~$u$ to a certain leaf~$u'$.

We may now suppose that~$\col$ uses at least two colors.
Let~$T_r\in \objects(e,r)$ and~$T_{r'}\in \objects(e,r')$,
be the trees extending the farthest into~$e$ (arbitrarily chosen
in case of a tie). Note that these trees might not exist.
Also note that $T_r$ and $T_{r'}$ are not in~$\objects^*(e)$.
We define the following colors.
\begin{itemize}
\item Let~$c_r$ be the color of~$T_r$, if~$T_{r}$ exists,
		and an arbitrary color otherwise.
\item Let~$c_{r'}$ be the color of~$T_{r'}$, if~$T_{r'}$ exists,
		and~$\col(T_{r'})\neq \col(T_{r})$ (if~$T_{r}$ does not exist,
		we assume this is always true), and an arbitrary color different
		from~$c_r$ otherwise.
\end{itemize}
We then do the following.
\begin{enumerate}
	\item[(a)] If~$T_{r}$ fully contains~$e$, we color all trees in~$\objects^*(e)$
	using~$c_{r'}$.
	\item[(b)] If~$T_{r'}$ fully contains~$e$, we color all trees in~$\objects^*(e)$
	using~$c_{r}$.
	\item[(c)] Otherwise, we use the chain method for NM-colorings
	using~$c_r$ and~$c_{r'}$ on~$\objects^*(e) \cup \{T_{r}\} \cup \{T_{r'}\}$. We start from~$r$ with color~$c_r$ so that~$T_{r}$ is
	the first tree colored and keep its color.
	We then check if the color of~$T_{r'}$ changed. If so,
	let~$\core_{r'}\subseteq \core$ be the subset
	of trees contained in the subspace rooted at~$r'$ (including~$e$
	but not~$r$) and excluding~$T_{r'}$. We exchange~$c_r$ and~$c_{r'}$
	in~$\core_{r'}$; see Fig. \ref{fig:case-no-intersection}.
\end{enumerate}

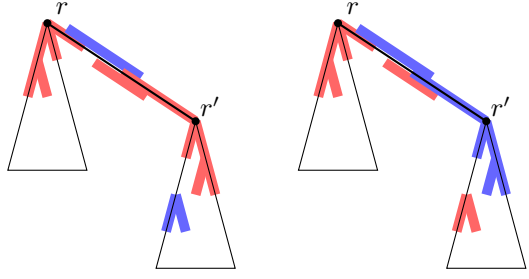
\begin{figure}
\begin{center}
\begin{tikzpicture}[scale=0.65]
\node at (0,0) (r) {};
\node at (3,-2) (r') {};

\draw[line width=1.3mm, red!60] (r.center) --++ (-0.4,-1.5);
\draw[line width=1.3mm, red!60] (r.center) --++ (0.2,-0.75);
\draw[line width=1.3mm, red!60] ([xshift=-0.2cm, yshift=-0.75cm]r.center)  --++ (0.2,-0.75);
\draw[line width=1.3mm, red!60] (r.center) --++ (0.75, -0.5);

\draw[line width=1.3mm, blue!60] ([xshift=0.4cm, yshift=-0.1cm]r.center)  --++ (1.5,-1);

\draw[line width=1.3mm, red!60] ([xshift=0.95cm, yshift=-0.8cm]r.center)  --++ (1.05,-0.7);

\draw[line width=1.3mm, red!60] ([xshift=1.5cm, yshift=-1cm]r.center)  --++ (1.5,-1);
\draw[line width=1.3mm, red!60] (r'.center) --++ (0.4,-1.5);
\draw[line width=1.3mm, red!60] (r'.center) --++ (-0.2,-0.75);
\draw[line width=1.3mm, red!60] ([xshift=0.2cm, yshift=-0.75cm]r'.center)  --++ (-0.2,-0.75);

\draw[line width=1.3mm, blue!60] ([xshift=-0.4cm, yshift=-1.5cm]r'.center)  --++ (0.2,-0.75);
\draw[line width=1.3mm, blue!60] ([xshift=-0.4cm, yshift=-1.5cm]r'.center)  --++ (-0.2,-0.75);

\foreach \i in {r, r'}{
	\draw (\i.center) --++ (-0.8, -3) --++ (1.6,0) -- cycle;
	\draw[thick, fill=black] (\i.center) circle (0.06);
	\node at ([xshift=3mm, yshift=3mm]\i.center) {$\i$};
}

\draw[thick] (r.center) -- (r'.center);
\end{tikzpicture}
\qquad
\begin{tikzpicture}[scale=0.65]
\node at (0,0) (r) {};
\node at (3,-2) (r') {};

\draw[line width=1.3mm, red!60] (r.center) --++ (-0.4,-1.5);
\draw[line width=1.3mm, red!60] (r.center) --++ (0.2,-0.75);
\draw[line width=1.3mm, red!60] ([xshift=-0.2cm, yshift=-0.75cm]r.center)  --++ (0.2,-0.75);
\draw[line width=1.3mm, red!60] (r.center) --++ (0.75, -0.5);

\draw[line width=1.3mm, blue!60] ([xshift=0.4cm, yshift=-0.1cm]r.center)  --++ (1.5,-1);

\draw[line width=1.3mm, red!60] ([xshift=0.95cm, yshift=-0.8cm]r.center)  --++ (1.05,-0.7);

\draw[line width=1.3mm, blue!60] ([xshift=1.5cm, yshift=-1cm]r.center)  --++ (1.5,-1);
\draw[line width=1.3mm, blue!60] (r'.center) --++ (0.4,-1.5);
\draw[line width=1.3mm, blue!60] (r'.center) --++ (-0.2,-0.75);
\draw[line width=1.3mm, blue!60] ([xshift=0.2cm, yshift=-0.75cm]r'.center)  --++ (-0.2,-0.75);

\draw[line width=1.3mm, red!60] ([xshift=-0.4cm, yshift=-1.5cm]r'.center)  --++ (0.2,-0.75);
\draw[line width=1.3mm, red!60] ([xshift=-0.4cm, yshift=-1.5cm]r'.center)  --++ (-0.2,-0.75);

\foreach \i in {r, r'}{
	\draw (\i.center) --++ (-0.8, -3) --++ (1.6,0) -- cycle;
	\draw[thick, fill=black] (\i.center) circle (0.06);
	\node at ([xshift=3mm, yshift=3mm]\i.center) {$\i$};
}

\draw[thick] (r.center) -- (r'.center);
\end{tikzpicture}
\end{center}
\caption{If the color of~$T_{r'}$ changes with the chain method,
we swap the labels of the old and new colors of~$T_{r'}$
in the subspace rooted at~${r'}$.
}\label{fig:case-no-intersection}
\end{figure}
The following lemma proves the extended coloring is non-monochromatic.
\begin{lemma}\label{lem:extending-coloring}
  Any NM-coloring~$c$ on~$\core$ can be extended
  to~$\Objects$ without using any extra color if~$c$ uses
  two colors or more, and with two colors otherwise.
\end{lemma}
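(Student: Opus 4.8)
The plan is to verify directly that the constructed coloring is non-monochromatic at every point, and separately that no object ever receives two different colors. For the first part it suffices to test points $q\in\treespace$ that are internal nodes of $\treespace$ or lie in the interior of an edge, and for each such $q$ lying in at least two objects of $\objects$ to exhibit two objects of different colors through $q$. For the second part one checks that each tree of $\objects\setminus\core$ is colored exactly once (trees through an internal node in the preliminary pass, floating trees when their unique edge is processed), while core trees are only relabeled by the swaps in case~(c), whose consistency is addressed below.

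The degenerate case, where $\col$ uses fewer than two colors, is quick: then every node of $\treespace$ lies in at most one tree of $\objects$, so running the chain method from scratch along every root-to-leaf path never colors a single tree twice, and every point of $\treespace$ lies on such a path. Non-monochromaticity — indeed conflict-freeness — then follows from Lemma~\ref{lem:chain-methods} applied path by path.

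For the main case (where $\col$ uses at least two colors) the engine is a \emph{core-dominance} observation: for every point $q$ lying in at least two trees of $\objects$, either $q$ lies in at least two trees of $\core$, or all but at most one of the trees through $q$ are floating trees contained in the interior of a single edge $e$, hence members of $\objects^*(e)$. In the first situation the core coloring is already non-monochromatic at $q$, and since adding further trees of arbitrary colors can never turn a non-monochromatic set monochromatic, the objects colored in the internal-node pass do no harm. I would prove the dichotomy from the extremality defining $\core$: any tree reaching past $q$ from an endpoint $v$ of $e$ is dominated by the two farthest-reaching trees of $\objects(e,v)\subseteq\core$, which therefore also contain $q$; a short count of how many trees enter $q$ from each side of $e$ yields the claim and, in particular, shows that every internal node carrying two trees already carries two core trees. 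It then remains only to treat an interior point $q$ of an edge $e=rr'$ whose surplus trees are floating: here the objects of $\{T_r,T_{r'}\}\cup\objects^*(e)$ containing $q$ already number at least two (since any tree reaching past $q$ from $r$ forces $T_r$ past $q$, and symmetrically for $r'$), so the two-color chain coloring of this family in cases~(a)--(c), via Lemma~\ref{lem:chain-methods} restricted to the interval $e$, furnishes two differently-colored trees through $q$; cases~(a) and~(b), where $T_r$ or $T_{r'}$ spans all of $e$, are checked directly, the spanning tree meeting every interior point in the opposite color to $\objects^*(e)$.

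I expect the genuine difficulty to lie entirely in case~(c), namely in showing that the exchange of $c_r$ and $c_{r'}$ on $\core_{r'}$ is globally consistent. The swap is symmetric, so it trivially preserves every same/different relation among trees contained in the subspace below $r'$; the delicate point is the boundary at $r'$, where the trees of $\objects(e,r')$ straddle the cut — in particular the deliberately exempted $T_{r'}$ and the second-farthest such tree, which \emph{is} swapped — and must be reconciled both with the chain coloring on $e$ and with the relabeled coloring below $r'$. The clean way to handle this, which I would adopt, is to process edges in a root-to-leaf order, so that the subspace below $r'$ is still uncolored apart from core trees when $e$ is treated; the swap is then a relabeling applied before that subspace is processed, and an induction on the processing order maintains the invariant that the coloring is non-monochromatic on all already-processed parts. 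Verifying that the exempted tree $T_{r'}$ meshes correctly with the swapped second-farthest member of $\objects(e,r')$ is the one spot where I anticipate spending real effort.
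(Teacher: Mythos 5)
You follow essentially the same route as the paper's proof: the extremal choice of $\core$ makes every tree of $\objects\setminus\core$ that contains an internal node doubly covered by $\core$, each edge is finished with the chain method of Lemma~\ref{lem:chain-methods} applied to $\{T_r,T_{r'}\}\cup\objects^*(e)$, and case~(c) is repaired by exchanging $c_r$ and $c_{r'}$ below $r'$; your ``core-dominance'' dichotomy is a unified packaging of what the paper states as two separate arguments (double coverage for trees through internal nodes, domination by $T_r,T_{r'}$ inside an edge). The one genuine divergence is the treatment of the swap, and it is worth recording because your version is the sounder one. The paper colors edges in an \emph{arbitrary} order and swaps only the core trees $\core_{r'}\subseteq\core$; its justification for a point $p\notin e$ is that all trees through $p$ lie on one side of $e$, ``hence two trees of different color.'' This tacitly assumes the swap is a uniform relabeling of everything already colored on the $r'$-side, which it is not: if some edge $e''$ below $r'$ was processed \emph{before} $e$, its chain-colored trees of $\objects^*(e'')$ keep their colors while the core trees around them are swapped, so a point of $e''$ covered by exactly one core tree (say the farthest-reaching tree from an endpoint of $e''$, originally colored $c_r$) and one floating tree (chain-colored $c_{r'}$) becomes monochromatic after the exchange. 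Your root-to-leaf processing order eliminates exactly this scenario: when $e$ is treated, nothing below $r'$ is colored except trees of $\core$ (and the preliminary-pass trees, which are doubly covered by $\core$ and remain harmless under any permutation of the core colors), so the swap genuinely is a relabeling, and your induction on the processing order goes through. In short: same decomposition and same key mechanisms, but your ordering is not a cosmetic choice --- either it, or enlarging the swap to \emph{all} already-colored trees below $r'$ rather than only those in $\core_{r'}$, is needed to make the case~(c) argument airtight, and the paper as written does neither.
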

\begin{proof}
Let $\objects_1$ be the subset of trees in $\objects\setminus \core$ that contain an internal node of~$\treespace$, and let $\objects_2$ be the remaining trees in $\objects\setminus \core$. By Lemma~\ref{lem:c}, we have an NM-coloring on $\core$.
To prove that the method described above gives us an NM-coloring
on~$\core\cup\objects_2$, we show that the following invariant
holds each time an edge is colored: the coloring
on~$\core\cup\objects_2$ is non-monochromatic when
restricted to colored trees. It is clear that before the first edge
is colored, the coloring is non-monochromatic as at this point the
only trees colored are exactly those in~$\core$. We hence only have
to show the invariant still holds after coloring an edge~$e=\{r,r'\}$.
If we are in cases (a) or (b), the invariant
trivially holds. It remains to consider the third case.

In the case~(c) we use the chain
method on~$\objects^*(e) \cup \{T_{r}\} \cup \{T_{r'}\}$,
which immediately implies the coloring is non-monochromatic on~$e$.
To prove it is also non-monochromatic elsewhere,
let~$p\notin e$ be a point contained in at least
two trees. Then we only have to show that the label swap we did on
one side of~$e$ keeps the coloring non-monochromatic.
The point~$p$ cannot be contained in one tree containing~$r$
and one tree containing~$r'$ at the same time, because no
tree contains~$e$ fully. Therefore,~$p$ is
contained in at least two trees from either side of~$e$, hence
two trees of different color.

Furthermore, the trees in $\objects_1$ received an arbitrary color already used. To prove that this gives an NM-coloring for $\objects = \core \cup \objects_1\cup\objects_2$, it suffices to prove that each tree $T\in\objects_1$ is \emph{doubly-covered} by~$\core$, that is, any point $q\in T$ is contained in at least two trees in~$\core$. To this end, let $e$ be an edge such that $q\in e$. Then, since $T\not\in \core$ and $T$ contains an endpoint $v$ of~$e$, the two trees in $\objects(e,v)$ contain~$q$. Hence, $T$ is doubly-covered by~$\core$, as claimed.
  \end{proof}

\begin{theorem}\mbox{}
  \label{thm:main}
\begin{enumerate}
   	  \item $\NMCN{tree}{trees} (k, \ell; n)
    	    \leq \min\left(\ell +1, 2 \sqrt{6k} \right)$.
      \item $\CFCN{tree}{trees} (k, \ell; n)=O(\ell \log k)$.
  	\end{enumerate}
\end{theorem}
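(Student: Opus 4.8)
For the first statement I would simply combine the two lemmas already in hand. Lemma~\ref{lem:c} gives an NM-coloring of~$\core$ with $\min(\ell+1,2\sqrt{6k})$ colors, and Lemma~\ref{lem:extending-coloring} extends any such coloring to all of~$\objects$ with no new colors when it uses at least two colors, and with two colors otherwise. Since $\min(\ell+1,2\sqrt{6k})\geq 2$ (as $\ell\geq 1$ and $k\geq 1$), the two-color fallback stays within budget, yielding the bound $\NMCN{tree}{trees}(k,\ell;n)\leq\min(\ell+1,2\sqrt{6k})$.

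For the CF-bound my plan is to invoke the standard NM-to-CF reduction of Smorodinsky~\cite{smor-geomCF-06}, but to run it on the \emph{core}~$\core$ rather than on all of~$\objects$. Recall that this reduction proceeds in rounds: each round NM-colors the still-uncolored subfamily with $\beta$ colors, gives the largest color class one fresh color, and discards it; as that class contains a $\geq 1/\beta$ fraction of the survivors, $m$ objects are exhausted after $O(\beta\log m)$ rounds and hence $O(\beta\log m)$ colors. Every subfamily of~$\core$ is a set of trees with at most~$\ell$ leaves in~$\treespace$, so the first statement lets each round use $\beta:=\min(\ell+1,2\sqrt{6k})\leq\ell+1$ colors. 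Taking $m:=|\core|\leq 6k-12$, this would CF-color~$\core$ with $O(\beta\log|\core|)=O(\ell\log k)$ colors. This is precisely where a $\log k$ rather than a $\log n$ appears: the reduction is charged only against the $O(k)$ trees of the core.

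Next I would extend this CF-coloring of~$\core$ to all of~$\objects$ using only a constant number of extra colors, taken from a palette \emph{disjoint} from the one used on~$\core$. For each edge~$e$ I would CF-color the intra-edge trees~$\objects^*(e)$ (those contained in~$e$ and not in~$\core$) by the three-color chain method of Lemma~\ref{lem:chain-methods}, reusing the same three colors across all edges since chains in distinct edges never interact; each remaining non-core tree---one that contains an internal node of~$\treespace$---would receive an arbitrary color from this palette. The total is $O(\ell\log k)+3=O(\ell\log k)$.

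The step I expect to be the real obstacle is verifying that this extension preserves the conflict-free property, i.e.\ that the added non-core trees never destroy the uniquely-colored witness supplied by the core. I would argue this by a case analysis on a point~$p$ contained in at least one tree. If some core tree covers~$p$, the CF-coloring of~$\core$ already provides a color used by exactly one core tree at~$p$; since the non-core trees draw from a disjoint palette, that color remains used exactly once among all trees covering~$p$. If no core tree covers~$p$, then by the doubly-covered property in the proof of Lemma~\ref{lem:extending-coloring} (a non-core tree meeting an internal node is covered by two core trees) every tree through~$p$ must lie inside the single edge containing~$p$; these are exactly the intervals treated by the chain method, so they are conflict-free at~$p$. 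The delicate points are confirming that these two cases are exhaustive and that the disjointness of palettes, together with the doubly-covered property, genuinely prevents a non-core tree from cancelling the core's witness.
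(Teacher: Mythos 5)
Your proposal is correct, and it follows the paper's own route for both parts: part~1 is exactly the paper's proof (combine Lemma~\ref{lem:c} with Lemma~\ref{lem:extending-coloring}, noting the two-color fallback is within the bound), and part~2 rests on the same key idea as the paper, namely that Smorodinsky's NM-to-CF reduction must be charged against the core of size $O(k)$ rather than against all $n$ objects---this is precisely where $\log k$ replaces $\log n$, as the paper itself remarks. Two differences in detail are worth recording. First, the paper runs the reduction on the trimmed family $\core'$ using the $(\ell'+1)$-color NM-coloring of Lemma~\ref{lem:c'} (after spending $\sqrt{6k}-1$ unique colors on the high-complexity trees), whereas you run it on $\core$ itself, using part~1 as a hereditary bound: every subfamily of $\core$ is again a family of at most $n$ trees with at most $\ell$ leaves, so each round can be NM-colored with $\min(\ell+1,2\sqrt{6k})$ colors. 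Your variant is sound and arguably cleaner, since it never has to argue that a CF-coloring of the trimmed trees pulls back to a CF-coloring of $\core$---a transfer that is more delicate than the NM transfer of Lemma~\ref{lem:c'_to_c}, because a tree of $\core$ can cover points that its trimmed version does not. Second, the paper disposes of the extension step with ``similar techniques as for the NM-coloring,'' while you make it explicit; your disjoint-palette argument combined with the doubly-covered property is exactly the right justification, and the case analysis is exhaustive (in your second case the point cannot be an internal node, since any tree through an internal node is doubly covered by $\core$). One small inaccuracy: under your definition of $\objects^*(e)$, chains in distinct edges \emph{can} interact, namely at a shared internal endpoint contained in trees of both chains. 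This does no harm, because any such point is covered by core trees (again by the doubly-covered property), so it falls under your first case, where the core supplies the uniquely occurring color; but the blanket claim of non-interaction should either be dropped or restricted to trees containing no internal node, as the paper does by coloring those trees before forming the chains.
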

\begin{proof}
  For the NM-coloring part of the theorem,
  we use Lemmas~\ref{lem:c}
  and~\ref{lem:extending-coloring}.
For the second part, if $\ell > 2\sqrt{6k}$ we again reduce $\core$
to $\core'$ using at most $\sqrt{6k}-1$ colors. Then
use the result by Smorodinsky \cite{smor-geomCF-06} on the
NM-coloring on $\core'$ provided by Lemma~\ref{lem:c'}. Since this
coloring uses at most $\ell' +1$ colors and $|\core'|\leqslant 6k-12$, the
CF-coloring uses $O(\ell \log k)$ colors. We then extend the coloring
to $\objects$ using similar techniques as for the NM-coloring.
This coloring uses $O(\sqrt{k} \log k)$ colors if $\ell > 2\sqrt{6k}$,
which is in $O(\ell \log k)$, and directly $O(\ell \log k)$ colors
otherwise. Note that a direct application of the result of
Smorodinsty~\cite{smor-geomCF-06} would give a~$O(\ell \log n)$ bound
instead.
  \end{proof}

\subsection{The lower bound}
We show a lower bound for the number of colors\footnote{From now on, we either identify colors
with integers or we use actual colors (red, blue, etc.) in our descriptions, whichever is more convenient.}
needed to NM-color a set of trees in a tree space.

\begin{theorem}\label{thm:lb-nm}
For all~$n, k$, and~$\ell$, there exist a tree space~$\TreeSpace$
with~$k$ leaves and a set~$\objects$
at most~$n$ trees on~$\treespace$, each with at most~$\ell$
leaves, such that any non-monochromatic coloring of~$\objects$ uses
at least~$\min \left(\ell+1, \left\lfloor \tfrac{1+ \sqrt{1+8k} }{2} \right\rfloor, n \right)$ colors. In other words,
$$\NMCNtreetree (k, \ell; n)\geqslant
\min \left(\ell+1, \left\lfloor \tfrac{1+ \sqrt{1+8k} }{2} \right\rfloor, n \right).
 $$
\end{theorem}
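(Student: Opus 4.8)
The plan is to show the lower bound constructively: for the value
\[
m := \min\left(\ell+1,\ \left\lfloor \tfrac{1+\sqrt{1+8k}}{2}\right\rfloor,\ n\right),
\]
I will build a tree space $\treespace$ and a family of exactly $m$ trees in which \emph{every} NM-coloring is forced to assign pairwise distinct colors to the $m$ trees. This immediately gives a lower bound of $m$ colors, matching the statement.

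The forcing gadget is a collection of \emph{private pairwise witnesses}. I take $\treespace$ to be a star: a single central internal node $c$ with $\binom{m}{2}$ leaf edges $e_{ij}$, one for each unordered pair $\{i,j\}$ with $1\le i<j\le m$, and then pad with $k-\binom{m}{2}$ further leaf edges attached at $c$ so that $\treespace$ has exactly $k$ leaves (these dummy edges play no role). For each $i$ I define the tree $A_i$ to consist of $c$ together with an initial segment of every edge $e_{ij}$ with $j\ne i$, each such segment reaching a common interior point $p_{ij}$ of $e_{ij}$ shared with $A_j$. Each $A_i$ is then a connected subtree of $\treespace$ with exactly $m-1$ leaves (its endpoints $p_{ij}$).

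The key step is the observation that, for each pair, the point $p_{ij}$ is contained in exactly the two trees $A_i$ and $A_j$: the only trees entering the interior of $e_{ij}$ are $A_i$ and $A_j$, since $e_{ij}$ enters the definition of $A_h$ only when $h\in\{i,j\}$, and both of these reach $p_{ij}$ by construction. Hence $S_{p_{ij}}=\{A_i,A_j\}$, so any NM-coloring must give $A_i$ and $A_j$ different colors. Ranging over all $\binom{m}{2}$ pairs forces the $m$ trees to receive pairwise distinct colors, so at least $m$ colors are used, as desired.

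It remains to check the parameters, which is the only genuine bookkeeping. There are $m\le n$ trees; each has $m-1\le \ell$ leaves because $m\le \ell+1$; and the non-dummy part of $\treespace$ has $\binom{m}{2}\le k$ leaves because $m\le\left\lfloor \tfrac{1+\sqrt{1+8k}}{2}\right\rfloor$ is precisely the largest integer satisfying $m(m-1)/2\le k$ (obtained by squaring $2m-1\le\sqrt{1+8k}$). The degenerate small cases are immediate: $m=2$ reduces to a single edge carrying two overlapping intervals, forcing two colors, and $m\le 1$ is trivial; for $m\ge 3$ the center has degree $\ge 3$, so $\treespace$ has no degree-two node. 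I expect no real obstacle beyond verifying all three terms of the minimum simultaneously and matching $\binom{m}{2}\le k$ to the stated floor expression; the construction itself is just the private-witness star.
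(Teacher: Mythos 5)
Your proposal is correct and is essentially the same construction as the paper's: a star with one leaf edge per pair $\{i,j\}$, and $m$ trees forced to receive pairwise distinct colors by private pairwise witness points, with identical parameter bookkeeping ($m\le n$, $m-1\le\ell$ leaves per tree, $\binom{m}{2}\le k$). The only cosmetic difference is that you stop the trees at interior points $p_{ij}$ and pad with dummy edges, whereas the paper extends each tree to the leaves themselves and uses those leaves as the witnesses.
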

\begin{proof}
Let~$\TreeSpace$ be
a star with~$k$ leaves. We construct the set~$\Objects$ of~$m$
trees such that, for each pair of trees~$T,T'\in \Objects$,
there is a leaf of~$\TreeSpace$ contained in~$T$ and~$T'$, and no
other tree from~$\objects$.
Consequently, each tree in~$\Objects$ must be assigned
a distinct color.
To this end, we define~$m:=\min(\ell+1,m',n)$,
where $m':=\lfloor(1+ \sqrt{1+8k})/2 \rfloor$ is the
largest integer such that~${m'\choose 2} \leqslant k $.
Then, for every pair~$\{i,j\}$ with~$1\leqslant i<j \leqslant m$,
we choose a distinct leaf of~$\TreeSpace$ and associate it
with~$\{i,j\}$. The total number of such pairs
is~${m \choose 2} \leqslant {m' \choose 2} \leqslant k $,
hence we can indeed associate a distinct leaf to each pair.

Let now~$\Objects:= \{ T_1,\ldots, T_m\}$ be the set of
trees defined as follows: for each~$i=1,\ldots,m$, the
tree~$T_i$ is defined as the tree containing
all the leaves associated with pairs~$\{i,j\}$ for some~$j\neq i$,
i.e.,~$T_i$ is the union, for all~$j\neq i$, of
edges from the root to a leaf associated with~$\{i,j\}$.
Fig.~\ref{fig:star_lb} shows an example.
\begin{figure}
\begin{center}
\begin{tikzpicture}
\node at (0,0) (0) {};
\node at (0:1) (1) {};
\node at (60:1) (2) {};
\node at (120:1) (3) {};
\node at (180:1) (4) {};
\node at (240:1) (5) {};
\node at (300:1) (6) {};

\foreach \i in {1,...,3}{
	\draw[line width=1.3mm, red!50] (0.center) -- (\i.center);

\foreach \i in {0,...,6}{
	\draw[thick, fill=black] (\i) circle (0.06);
}
}
\foreach \i in {1,...,6}{
	\draw[thick] (0.center) -- (\i.center);
}

\node at (0:1.7) {$\{1,2\}$};
\node at (60:1.7) {$\{1,3\}$};
\node at (120:1.7) {$\{1,4\}$};
\node at (180:1.7) {$\{2,3\}$};
\node at (240:1.7) {$\{2,4\}$};
\node at (300:1.7) {$\{3,4\}$};
\end{tikzpicture}
\caption{An example of the non-monochromatic lower bound for~$k=6$,
$\ell=3$, and~$n=4$. The tree~$T_1$ is drawn in red. }
\label{fig:star_lb}
\end{center}
\end{figure}
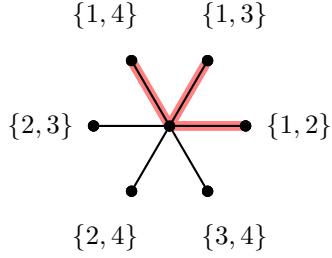
We now have to prove that
the construction is possible within the parameters.
Recall that~$m\leqslant n$ so we have indeed at most~$n$
trees in~$\objects$, and that~$m\leqslant m'$ where~$m'$
is chosen to ensure~$k$ leaves are enough. We therefore only have
to show that no tree~$T_i,\ldots, T_m$
has more than~$\ell$ leaves.
However, the number of leaves of each tree~$T_i$ is at most~$m-1$,
as we only create at most one leaf for~$T_i$ for each~$T_j$
with~$j\neq i$. Hence, since~$m\leqslant \ell +1$, each tree
has at most~$\ell$ leaves. Thus, the construction
does not violate the parameters.

Finally, each tree needs a distinct color, and since
there are~$m$ trees, the number of colors needed
is~$m=\min (\ell+1, \lfloor \tfrac{1+ \sqrt{1+8k} }{2} \rfloor, n)$.
  \end{proof}

Since any CF-coloring is also an NM-coloring, the lower bound in
Theorem~\ref{thm:lb-nm} holds for CF-coloring as well. The next
theorem gives a stronger lower bound for CF-coloring in the case $\ell=2$,
that is, when the objects are paths.

\begin{theorem}
For all~$n$ and~$k$, there exist a tree space~$\TreeSpace$
with~$k$ leaves and a~set~$\objects$
of at most~$n$ paths in~$\treespace$ such that any conflict-free
coloring of~$\objects$ uses
at least~$\lfloor\log_2 \min(k,n)\rfloor$ colors. In other words,
$$\CFCNtreepath (k; n)\geqslant \lfloor\log_2 \min(k,n)\rfloor. $$
\end{theorem}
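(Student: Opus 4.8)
The plan is to realize the classical logarithmic lower bound for conflict-free coloring of \emph{points with respect to intervals} inside a balanced binary tree space, using root-to-leaf paths as the objects. Let $m := \min(k,n)$ and $h := \lfloor \log_2 m\rfloor$, so that $2^h \le m \le k$ and $2^h \le n$. I would take $\treespace$ to be a tree space containing a perfect binary tree $B$ of depth $h$ with $2^h$ leaves and root $\rho$; the remaining $k-2^h$ leaves of $\treespace$ are attached as padding (for instance as additional children of $\rho$, which simultaneously removes the spurious degree-$2$ node at the root and keeps every internal node of degree at least~$3$). The object set $\objects$ consists of the $2^h \le n$ root-to-leaf paths $Q_1,\dots,Q_{2^h}$ of $B$, one per leaf; no object enters the padding part, so each object is a path and $|\objects|\le n$ while $\treespace$ has exactly $k$ leaves.

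The key observation is which point-sets $S_p$ arise. For an internal node $v\neq\rho$ of $B$, let $p_v$ be a point in the interior of the edge joining $v$ to its parent; a path $Q_i$ contains $p_v$ exactly when its leaf lies in the subtree $B_v$ rooted at $v$, so $S_{p_v}=\{Q_i : \text{leaf } i\in B_v\}$. Taking $p=\rho$ gives the full set $\{Q_1,\dots,Q_{2^h}\}$. Since $B$ is a perfect binary tree, these ``subtree sets'' are precisely the dyadic intervals of the index set $\{1,\dots,2^h\}$: the full set splits into the two half-size subtree sets of $\rho$'s children, each of which splits again, and so on down to singletons. Thus any conflict-free coloring of $\objects$ is in particular conflict-free with respect to every subtree set.

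The core is then a purely combinatorial claim, proved by induction on $h$: any coloring of $2^h$ elements that is conflict-free for all dyadic subtree sets uses at least $h+1$ colors. For the inductive step, apply the conflict-free property to the full set: it contains an element $Q_{i_0}$ whose color $c_0$ is unique within it, and since the full set is \emph{all} of $\objects$, the color $c_0$ is used exactly once overall. The leaf $i_0$ lies in exactly one child subtree of $\rho$, say the left one; then the right child subtree is itself a perfect binary tree on $2^{h-1}$ leaves whose subtree sets are again hyperedges and whose paths avoid $c_0$ entirely. By the induction hypothesis the right subtree's coloring uses at least $h$ colors, all distinct from $c_0$, so the whole coloring uses at least $h+1$ colors; the base case $h=0$ is immediate. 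Consequently any conflict-free coloring of $\objects$ uses at least $h+1=\lfloor\log_2\min(k,n)\rfloor+1\ge\lfloor\log_2\min(k,n)\rfloor$ colors, which proves the theorem.

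The step I expect to require the most care is the inductive one, specifically the claim that the witnessing color $c_0$ of the full set is \emph{globally} unique and hence entirely absent from the subtree on which we recurse; this is exactly what lets the forbidden colors accumulate across the $h$ levels instead of being reused, and it is the whole source of the logarithm. The remaining work is routine bookkeeping: verifying that the point $p_v$ just above each node realizes the subtree set \emph{exactly} (so that the dyadic family is genuinely among the sets $S_p$), and confirming that the padding used to raise the leaf count to exactly $k$ can be attached without creating degree-$2$ nodes and without any padding point seeing a nonempty object set.
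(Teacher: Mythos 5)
Your proposal is correct and follows essentially the same route as the paper: a complete binary tree of height $h=\lfloor\log_2\min(k,n)\rfloor$ with root-to-leaf paths as objects, and an induction in which the uniquely-colored path at the root hyperedge (which equals all of $\objects$, so its color is globally unique) forces a fresh color at every level while recursing into the child subtree avoiding that path. Your only additions are cosmetic refinements the paper glosses over---padding the leaf count to exactly $k$ while avoiding degree-$2$ nodes, and an indexing convention that yields $h+1$ rather than $h$ colors.
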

\begin{proof}
Let~$\TreeSpace$ be a rooted complete binary tree of
height~$h=\lfloor\log_2 \min(k,n)\rfloor$. Note that $\treespace$
has at most $\min(k,n)$ leaves. For
each leaf~$v$ of~$\TreeSpace$, we define~$\pi_v$ to be the path from~$v$
to the root~$r$ of~$\TreeSpace$. Our set~$\Objects$ of objects
is now defined
as~$\Objects:=\{ \pi_v \mid v \text{ a leaf of } \TreeSpace \}$.
(Trivially,~$|\objects|\leqslant n$.)
%
%
%
%
%
%

Let~$c\colon\Objects \to \integers $ be a conflict-free
coloring of~$\Objects$.
We prove that~$c$ uses at least~$h=\lfloor\log_2 \min(k,n)\rfloor$
colors by induction on the height~$h$ of~$\TreeSpace$.
If~$h=1$, then there is only one degenerate path
and the claim trivially holds. Suppose now that the claim holds
for a tree of height~$h$, and suppose the height of~$\TreeSpace$
is~$h+1$. Since~$c$ is a conflict-free
coloring, among the paths containing the root~$r_1:=r$ of~$\TreeSpace$,
there must be a path~$\pi_1$ of unique color. Since by construction
all paths in~$\Objects$ contain the root, the color of~$\pi_1$
is unique among all paths. Let~$r_2$ be the child of~$r_1$
not contained in~$\pi_1$.
We now use the induction hypothesis on the subtree rooted
at~$r_2$ with paths containing~$r_2$ cut above it. Among these
paths, there are~$h$ that use distinct colors. Moreover,
none of these path can use~$c(\pi_1)$, as this color is unique
among all paths. Hence, we have indeed~$h+1$ paths using distinct
colors. This concludes the proof.
  \end{proof}
The following theorem is a direct consequence of the previous two.
\begin{theorem}
For all~$n, k$, and~$\ell$, there exist a tree space~$\TreeSpace$ with~$k$ leaves and a set~$\objects$
at most~$n$ trees in~$\treespace$ with at most~$\ell$ leaves each
such that any conflict-free coloring of~$\objects$ uses
at least~$\min \left(\ell+1, \left\lfloor \tfrac{1+ \sqrt{1+8k} }{2} \right\rfloor,
\lfloor\log_2 \min(k,n)\rfloor\right)$ colors.
In other words,
$$\CFCNtreetree (k, \ell; n)\geqslant \max
\left\{\begin{array}{l}
        \min \left(\ell+1, \left\lfloor \tfrac{1+ \sqrt{1+8k} }{2} \right\rfloor \right) \\[16pt]
        \lfloor\log_2 \min(k,n)\rfloor.
        \end{array} \right.
 $$
\end{theorem}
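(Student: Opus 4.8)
The plan is to treat the statement exactly as it is advertised, namely as a direct corollary of the two preceding lower-bound theorems, exploiting the fact that $\CFCNtreetree(k,\ell;n)$ is by definition a worst-case quantity: it is the maximum, over all admissible instances (a tree space with at most $k$ leaves together with at most $n$ trees of at most $\ell$ leaves each), of the minimum number of colors in a CF-coloring of that instance. Consequently, to establish a lower bound of the form $\max(A,B)$ it suffices to exhibit \emph{two} admissible instances, one forcing at least $A$ colors and one forcing at least $B$ colors; since the worst-case maximum dominates the value on each individual instance, it is simultaneously at least $A$ and at least $B$, hence at least $\max(A,B)$. No single construction realizing both bounds at once is required.

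For the first term I would invoke Theorem~\ref{thm:lb-nm}: its star construction is an admissible instance on which every NM-coloring needs at least $\min(\ell+1, \lfloor (1+\sqrt{1+8k})/2\rfloor, n)$ colors. Because every CF-coloring is in particular an NM-coloring, the same instance forces at least this many colors for CF-colorings as well, which yields the first branch of the maximum. For the second term I would invoke the preceding CF-path theorem: its complete binary tree of height $\lfloor\log_2\min(k,n)\rfloor$, together with the root-to-leaf paths as objects, forces any CF-coloring to use at least $\lfloor\log_2\min(k,n)\rfloor$ colors. Here I would verify admissibility under the present parameters: the tree space has at most $\min(k,n)\le k$ leaves, there are at most $n$ objects, and each object is a path and therefore has exactly two leaves, so the constraint ``at most $\ell$ leaves'' is met whenever $\ell\ge 2$. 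Taking the better of the two witnesses then gives $\CFCNtreetree(k,\ell;n)\ge\max(A,B)$.

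The step needing the most attention—though it is bookkeeping rather than genuine difficulty—is checking that each borrowed construction respects all three parameters $k$, $\ell$, $n$ as they appear in the combined statement, together with two minor reconciliations. First, the logarithmic branch relies on paths, so it contributes only in the regime $\ell\ge 2$; for $\ell=1$ the objects are essentially degenerate and only the first branch is relevant. Second, the first branch is written here without the explicit cap $n$ that appears in Theorem~\ref{thm:lb-nm}; this is harmless because $n$ is always a trivial upper bound on any chromatic number, so omitting it can only weaken the stated term, and the inequality continues to hold. Beyond these consistency checks and the observation that CF-colorings refine NM-colorings, there is no new combinatorial content to supply.
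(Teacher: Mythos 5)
Your proposal coincides with the paper's own treatment: the paper proves this theorem simply by declaring it ``a direct consequence of the previous two,'' i.e., exactly your argument of taking two separate witness instances---the star construction of Theorem~\ref{thm:lb-nm} (valid for CF-colorings since every CF-coloring is an NM-coloring) and the complete-binary-tree path construction---and using that the worst-case chromatic number dominates each branch of the maximum. One small slip in your bookkeeping remark: dropping the cap $n$ from the first branch \emph{strengthens} rather than weakens that term (a minimum over fewer quantities is larger), but this imprecision originates in the paper's own statement and is harmless under its standing convention that $n$ is a trivial upper bound on every chromatic number.
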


\section{Balls in Tree Spaces and on Planar Network Spaces}\label{sec:balls}
In this section we restrict the objects to balls.
Let~$\graph$ be a network space,~$d\colon\graph^2 \to \reals$
a distance function on~$\graph$, and let~$\objects$ be a set of balls
on~$\graph$.
We  define the coverage~$\cov_x(B)$ of a node~$x$
by a ball~$B=B(p,r)$ containing~$x$ as~$\cov_x(B):=r-d(p,x)$.
Given a node~$x$ contained in at least one ball from~$\objects$,
we define~$B_x$ as the ball maximizing the coverage of~$x$,
where we break ties using an arbitrary but fixed ordering on the balls.
We say that $B_x$ is \emph{assigned} to~$x$.
Note that~$B_x$ does not exist if no ball contains~$x$, and that
a ball can be assigned
to multiple nodes. We will regularly use the following lemma
regarding the assigned balls.

\begin{lemma}\label{lem:connected_core}
Let $x$ be an internal node of $\graph$.
\begin{enumerate}
\item[(i)] Suppose $\graph$ is a tree space, and let $\tree_1,\ldots,\tree_{{\mathrm deg}(x)}$ denote the
           subtrees resulting from removing $x$ from~$\graph$ or, more precisely,
           the closures of the connected components of $\tree\setminus\{x\}$.
 Let $p$ be a point in some subtree $\tree_i$ and suppose $p$ is contained in a ball $B\in \objects$ whose center lies in
           $\tree_j$ with $j\neq i$. Then $p\in B_x$.
\item[(ii)] Suppose $x$ is contained in at least one ball in $\objects$.
   Let $\pi$ be a shortest path from~$x$ to the center of~$B_x$, and
            let $y$ be a node on the path~$\pi$. Then~$B_x$ is also
            assigned to~$y$, that is,~$B_x=B_y$.
\end{enumerate}
\end{lemma}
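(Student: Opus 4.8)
The plan is to prove the two parts separately, in each case reducing everything to the definition $\cov_x(B)=r-d(q,x)$ for a ball $B=B(q,r)$, the maximality of $B_x$ among balls containing $x$, and the triangle inequality for the geodesic distance $d$.

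For part~(i), I would exploit the tree structure of $\graph$. Since $p$ lies in $\tree_i$ while the center $q$ of $B=B(q,r)$ lies in a different subtree $\tree_j$, every path in $\graph$ from $q$ to $p$ must pass through the cut node $x$; hence $d(q,p)=d(q,x)+d(x,p)$. From $p\in B$ I get $d(q,p)\leq r$, so $d(x,p)\leq r-d(q,x)=\cov_x(B)$; in particular $\cov_x(B)\geq 0$, so $B$ contains $x$ and is a legitimate candidate for $B_x$. Writing $B_x=B(q',r')$, the maximality of $B_x$ at $x$ gives $\cov_x(B_x)=r'-d(q',x)\geq \cov_x(B)\geq d(x,p)$. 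The triangle inequality then closes the argument: $d(q',p)\leq d(q',x)+d(x,p)\leq d(q',x)+\cov_x(B_x)=r'$, so $p\in B_x$.

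For part~(ii), the key is a single coverage identity obtained from the shortest path. Since $y$ lies on a shortest path $\pi$ from $x$ to the center $q$ of $B_x=B(q,r)$, the subpath of $\pi$ from $y$ to $q$ is again shortest, so $d(x,q)=d(x,y)+d(y,q)$. Substituting into the definition yields $\cov_y(B_x)=r-d(q,y)=(r-d(q,x))+d(x,y)=\cov_x(B_x)+d(x,y)$. For any other ball $B'=B(q',r')$, the triangle inequality $d(q',x)\leq d(q',y)+d(y,x)$ gives the reverse bound $\cov_x(B')=r'-d(q',x)\geq \cov_y(B')-d(x,y)$. I would then argue by contradiction: if $B_y\neq B_x$, then $B_y$ is preferred over $B_x$ at $y$, meaning either $\cov_y(B_y)>\cov_y(B_x)$, or $\cov_y(B_y)=\cov_y(B_x)$ with $B_y$ earlier in the fixed tie-breaking order. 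Combining the identity with the bound, $\cov_x(B_y)\geq \cov_y(B_y)-d(x,y)\geq \cov_y(B_x)-d(x,y)=\cov_x(B_x)\geq 0$, so $B_y$ contains $x$ and competes for assignment there; the inequality is strict precisely when the coverage at $y$ is strict, while in the tie case equal coverage at $y$ forces $\cov_x(B_y)\geq\cov_x(B_x)$ with $B_y$ inheriting its earlier position in the ordering. Either way $B_y$ would be preferred over $B_x$ at $x$, contradicting the choice of $B_x$, so $B_y=B_x$.

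The algebraic manipulations of coverages are routine; the only step requiring genuine care is the tie-breaking bookkeeping in part~(ii), where I must verify that a tie in coverage at $y$ transfers to a tie in coverage at $x$, so that the same fixed ordering produces the contradiction. I expect this to be the main (minor) obstacle. Everything else follows directly from $d$ being a geodesic distance on a tree (for~(i), via the cut-node property) or on an arbitrary network space (for~(ii), via subadditivity of shortest paths and the triangle inequality).
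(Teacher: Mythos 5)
Your proposal is correct and takes essentially the same route as the paper: part (i) is the direct argument from the definition of $B_x$ (which the paper dismisses as immediate), and part (ii) uses exactly the paper's two ingredients, the shortest-path identity $\cov_x(B_x)=\cov_y(B_x)-d(x,y)$ and the triangle-inequality bound $\cov_x(B_y)\geq \cov_y(B_y)-d(x,y)$, to force a tie that contradicts the fixed tie-breaking order. The only difference is presentational: the paper collapses its chain of inequalities into equalities and invokes the tie-breaking in one sentence, whereas you split the strict and tie cases and verify the ordering bookkeeping at both nodes explicitly.
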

\begin{proof}
Part~(i) follows immediately from the definition of~$B_x$. To prove part~(ii),
suppose for a contradiction that~$B_y \neq B_x$ for some $y\in\pi$.
Thus,~$\cov_y(B_y) \geqslant \cov_y(B_x)$. Because~$\pi$ is a
shortest path from $x$ to the center of $B_x$, we
have that~$\cov_x(B_x)  = \cov_y(B_x) - d(x,y)$.
Moreover,~$ \cov_y(B_y) - d(x,y) \leqslant \cov_x(B_y)$ because
of the triangle inequality.
Hence,~$\cov_x(B_x)
	\geqslant \cov_x(B_y)
	\geqslant \cov_y (B_y) -d(x,y)
	\geqslant cov_y(B_x) -d(x,y)
	=\cov_x(B_x) $.
Thus~$ \cov_x(B_x) = \cov_x(B_y)$ and~$\cov_y(B_x) = \cov_y(B_y)$.
However, this is a contradiction as in case of a tie, we use
the fixed ordering to choose which ball to assign to a node.
  \end{proof}

\subsection{Tree spaces: the upper bound}\label{sec:balls-on-trees}
For balls on a tree space~$\treespace$, the upper bounds
from Theorem~\ref{thm:main} with $\ell=k$ apply.
Below we improve upon these bounds using the special structures of balls.
Let~$\treespace$ be a tree with~$t$ internal nodes.
We present algorithms to
NM-color balls on trees using two colors, and
CF-color them with~$\log t +3$ colors.

Let~$\objects$ be a set of~$n$ balls on~$\treespace$.
Let also~$\core:=\{B=B(c,r) \mid \exists x: B=B_x \}$ be
the set of balls assigned to at least one internal node. Recall that
an internal node~$x$ is assigned the ball maximizing the coverage
of~$x$.

\mypara{NM-coloring.}
We first explain how to NM-color~$\objects$. We
use a divide-and-conquer approach. If~$t=0$, that is~$\treespace$
consists of a single node or a single edge, we use the
chain method for~NM-coloring with colors blue and red.
If $t>0$, then we proceed as follows.
Let~$e=uv$ be an edge of~$\treespace$.
Let~$\treespace_u$, respectively~$\treespace_v$, be the
connected component of~$\treespace \setminus e$
containing~$u$, respectively~$v$.
Recall that~$B_u$ and~$B_v$ are the balls assigned to~$u$ and~$v$, respectively.
Note that we may assume that both $B_u$ or~$B_v$ exist, for otherwise
recursion is trivial.
Also observe that~$B_u$ and~$B_v$ may coincide.
We define
\[
\objects(u) := \{ \mbox{balls $B\in \objects$ whose center lies in~$\treespace_u$} \} \cup \{B_u\},
\]
We define~$\objects(v)$ similarly.
We recursively color~$\objects(u)$ in~$\treespace_u$ and~$\objects(v)$ in~$\treespace_v$,
obtaining colorings of~$\objects(u)$ and $\objects(v)$ with colors blue and red.
In the recursive calls on $\objects(u)$, and similarly for $\objects(v)$, we ``clip'' the balls to within $\treespace_u$. Note that the clipped balls are still balls in the space $\treespace_u$. This is clear for the balls whose center lies in $\treespace_u$. The center of $B_u$ may not lie in $\treespace_u$, but in that case it behaves within $\treespace_u$ as a ball  with center $u$ and radius $\cov_u(B_u)$.

Let~$\objects(e):= \objects \setminus (\objects(u)\cup \objects(v))$ be the set of the remaining
balls. In other words, $\objects(e)$ contains the balls whose center
is contained in~$e$, except for~$B_u$ and~$B_v$.
We color~$\objects(e)$, possibly swapping colors
in~$\objects(u)$ or~$\objects(v)$, as follows.
\begin{itemize}
\item If~$B_u=B_v$, we first ensure that it gets the same
	  color in both~$\objects(u)$ and~$\objects(v)$ by swapping
	  colors in one of the two subsets if necessary. We then color
	  all balls in~$\objects(e)$ blue if~$B_u$
	  is red, and red if~$B_u$ is blue.
\item If~$B_u \neq B_v$, let~$\pi$ be a longest simple path
	  containing~$u$ and~$v$. We
	  color~$\objects(e) \cup \{ B_u, B_v \}$ restricted to~$\pi$
	  using the non-monochromatic chain method. We then possibly
	  swap colors in~$\objects(u)$ and~$\objects(v)$ so that~$B_u$
	  and~$B_v$ match the colors they were given by the chain method.
\end{itemize}
Both cases are illustrated in Fig.~\ref{fig:tree-recursive-NM}.
\begin{figure}
\begin{center}
\begin{tikzpicture}[scale=0.65]
\node at (0,0) (1) {};
\node at (2,0.5) (2) {};
\node at (-0.8,1) (3) {};
\node at (-1.2,-0.7) (4) {};
\node at (-0.6,-1.9) (5) {};
\node at (3.2,0.9) (6) {};
\node at (2.9,-0.6) (7) {};
\node at (1.8,1.5) (8) {};

\node at ([xshift=4mm, yshift=-4mm]1.center) {$B_u$};
\node at ([xshift=0mm, yshift=-6mm]2.center) {$B_v$};

\draw[line width=1.3mm, black!30] (-0.3,-0.95) -- (1.center)
					-- (0.8,0.2);
\draw[line width=1.3mm, black!30] (4.center) -- (1.center)
					-- (-0.4, 0.5);
					
\draw[line width=1.3mm, black!30] (0.5, 0.305) -- (1.9, 0.655);
					
\draw[line width=1.3mm, black!30] (2.6, 0.7) -- (2.center)
					-- (1.6,0.4);
\draw[line width=1.3mm, black!30] (2.45,-0.05) -- (2.center);

\draw[line width=1.3mm, black!30] 
				([xshift=-1.2mm, yshift=-1.2mm]3.center) 
					-- (-0.32,0.13);
\draw[line width=1.3mm, black!30] 
				([xshift=-1.5mm, yshift=0.5mm]5.center) 
					-- (-0.35,-0.575);
					
\draw[line width=1.3mm, black!30] 
				(8.center) -- (1.9,1);					
\draw[line width=1.3mm, black!30] 
				([xshift=0.2mm, yshift=-1.7mm]6.center) 
				-- ([xshift=3.2mm, yshift=-1mm]2.center) 
				-- ([xshift=1.5mm, yshift=1mm]7.center);

\draw[thick]  (1.center) -- (2.center);
\draw[thick]  (1.center) -- (3.center);
\draw[thick]  (1.center) -- (4.center);
\draw[thick]  (1.center) -- (5.center);
\draw[thick]  (6.center) -- (2.center);
\draw[thick]  (7.center) -- (2.center);
\draw[thick]  (8.center) -- (2.center);

\foreach \i in {1,...,8}{
	\draw[thick, fill=black] (\i.center) circle (0.06);
}
\end{tikzpicture}
\hspace{0.5cm}
\begin{tikzpicture}[scale=0.65]
\node at (0,0) (1) {};
\node at (2,0.5) (2) {};
\node at (-0.8,1) (3) {};
\node at (-1.2,-0.7) (4) {};
\node at (-0.6,-1.9) (5) {};
\node at (3.2,0.9) (6) {};
\node at (2.9,-0.6) (7) {};
\node at (1.8,1.5) (8) {};

\node at ([xshift=-10mm, yshift=0mm]1.center) {$\objects(u)$};
\node at ([xshift=4mm, yshift=-4mm]1.center) {$B'_u$};
\node at ([xshift=6mm, yshift=7mm]2.center) {$\objects(v)$};
\node at ([xshift=-1mm, yshift=-6mm]2.center) {$B'_v$};

\draw[line width=1.3mm, red!60]  (4.center)-- (1.center);
\draw[line width=1.3mm, blue!60] (2.center) -- (2.6, 0.7);
\draw[line width=1.3mm, red!60] (-0.3,-0.95) -- (1.center)
					-- (-0.4, 0.5);
\draw[line width=1.3mm, blue!60] (2.45,-0.05) -- (2.center);

\draw[line width=1.3mm, blue!60] 
				([xshift=-1.2mm, yshift=-1.2mm]3.center) 
					-- (-0.32,0.13);
\draw[line width=1.3mm, blue!60] 
				([xshift=-1.5mm, yshift=0.5mm]5.center) 
					-- (-0.35,-0.575);
					
\draw[line width=1.3mm, blue!60] 
				(8.center) -- (1.9,1);					
\draw[line width=1.3mm, red!60] 
				([xshift=0.2mm, yshift=-1.7mm]6.center) 
				-- ([xshift=3.2mm, yshift=-1mm]2.center) 
				-- ([xshift=1.5mm, yshift=1mm]7.center);

\draw[thick]  (1.center) -- (3.center);
\draw[thick]  (1.center) -- (4.center);
\draw[thick]  (1.center) -- (5.center);
\draw[thick]  (6.center) -- (2.center);
\draw[thick]  (7.center) -- (2.center);
\draw[thick]  (8.center) -- (2.center);

\foreach \i in {1,...,8}{
	\draw[thick, fill=black] (\i.center) circle (0.06);
}
\end{tikzpicture}
\hspace{0.5cm}
\begin{tikzpicture}[scale=0.65]
\node at (0,0) (1) {};
\node at (2,0.5) (2) {};
\node at (-0.8,1) (3) {};
\node at (-1.2,-0.7) (4) {};
\node at (-0.6,-1.9) (5) {};
\node at (3.2,0.9) (6) {};
\node at (2.9,-0.6) (7) {};
\node at (1.8,1.5) (8) {};

\node at ([xshift=4mm, yshift=-4mm]1.center) {$B_u$};
\node at ([xshift=0mm, yshift=-6mm]2.center) {$B_v$};

\draw[line width=1.3mm, red!60] (4.center) -- (1.center)
					-- (0.8,0.2);
\draw[line width=1.3mm, red!60]  (-0.3,-0.95)-- (1.center)
					-- (-0.4, 0.5);
					
\draw[line width=1.3mm, blue!60] (0.5, 0.305) -- (1.9, 0.655);
					
\draw[line width=1.3mm, red!60] (2.6, 0.7) -- (2.center)
					-- (1.6,0.4);
\draw[line width=1.3mm, red!60] (2.45,-0.05) -- (2.center);

\draw[line width=1.3mm, blue!60] 
				([xshift=-1.2mm, yshift=-1.2mm]3.center) 
					-- (-0.32,0.13);
\draw[line width=1.3mm, blue!60] 
				([xshift=-1.5mm, yshift=0.5mm]5.center) 
					-- (-0.35,-0.575);
					
\draw[line width=1.3mm, red!60] 
				(8.center) -- (1.9,1);					
\draw[line width=1.3mm, blue!60] 
				([xshift=0.2mm, yshift=-1.7mm]6.center) 
				-- ([xshift=3.2mm, yshift=-1mm]2.center) 
				-- ([xshift=1.5mm, yshift=1mm]7.center);

\draw[thick]  (1.center) -- (2.center);
\draw[thick]  (1.center) -- (3.center);
\draw[thick]  (1.center) -- (4.center);
\draw[thick]  (1.center) -- (5.center);
\draw[thick]  (6.center) -- (2.center);
\draw[thick]  (7.center) -- (2.center);
\draw[thick]  (8.center) -- (2.center);

\foreach \i in {1,...,8}{
	\draw[thick, fill=black] (\i.center) circle (0.06);
}
\end{tikzpicture}
\\
\begin{tikzpicture}[scale=0.65]
\node at (0,0) (1) {};
\node at (2,0.5) (2) {};
\node at (-0.8,1) (3) {};
\node at (-1.2,-0.7) (4) {};
\node at (-0.6,-1.9) (5) {};
\node at (3.2,0.9) (6) {};
\node at (2.9,-0.6) (7) {};
\node at (1.8,1.5) (8) {};

\node at ([xshift=12mm, yshift=-3mm]1.center) {$B_u=B_v$};

\draw[line width=1.3mm, black!30] (-0.3,-0.95) -- (1.center)
					-- (2.center) -- (2.6, 0.7);
\draw[line width=1.3mm, black!30] (4.center) -- (1.center)
					-- (-0.4, 0.5);
\draw[line width=1.3mm, black!30] (2.45,-0.05) -- (2.center);

\draw[line width=1.3mm, black!30] (0.5, 0.305) -- (1.9, 0.655);

\draw[line width=1.3mm, black!30] 
				([xshift=-1.2mm, yshift=-1.2mm]3.center) 
					-- (-0.32,0.13);
\draw[line width=1.3mm, black!30] 
				([xshift=-1.5mm, yshift=0.5mm]5.center) 
					-- (-0.35,-0.575);
					
\draw[line width=1.3mm, black!30] 
				(8.center) -- (1.9,1);					
\draw[line width=1.3mm, black!30] 
				([xshift=0.2mm, yshift=-1.7mm]6.center) 
				-- ([xshift=3.2mm, yshift=-1mm]2.center) 
				-- ([xshift=1.5mm, yshift=1mm]7.center);

\draw[thick]  (1.center) -- (2.center);
\draw[thick]  (1.center) -- (3.center);
\draw[thick]  (1.center) -- (4.center);
\draw[thick]  (1.center) -- (5.center);
\draw[thick]  (6.center) -- (2.center);
\draw[thick]  (7.center) -- (2.center);
\draw[thick]  (8.center) -- (2.center);

\foreach \i in {1,...,8}{
	\draw[thick, fill=black] (\i.center) circle (0.06);
}
\end{tikzpicture}
\hspace{0.5cm}
\begin{tikzpicture}[scale=0.65]
\node at (0,0) (1) {};
\node at (2,0.5) (2) {};
\node at (-0.8,1) (3) {};
\node at (-1.2,-0.7) (4) {};
\node at (-0.6,-1.9) (5) {};
\node at (3.2,0.9) (6) {};
\node at (2.9,-0.6) (7) {};
\node at (1.8,1.5) (8) {};

\node at ([xshift=-10mm, yshift=0mm]1.center) {$\objects(u)$};
\node at ([xshift=4mm, yshift=-4mm]1.center) {$B'_u$};
\node at ([xshift=6mm, yshift=7mm]2.center) {$\objects(v)$};
\node at ([xshift=-1mm, yshift=-6mm]2.center) {$B'_v$};

\draw[line width=1.3mm, red!60] (4.center) -- (1.center);
\draw[line width=1.3mm, blue!60] (2.center) -- (2.6, 0.7);
\draw[line width=1.3mm, red!60] (-0.3,-0.95) -- (1.center)
					-- (-0.4, 0.5);
\draw[line width=1.3mm, blue!60] (2.45,-0.05) -- (2.center);

\draw[line width=1.3mm, blue!60] 
				([xshift=-1.2mm, yshift=-1.2mm]3.center) 
					-- (-0.32,0.13);
\draw[line width=1.3mm, blue!60] 
				([xshift=-1.5mm, yshift=0.5mm]5.center) 
					-- (-0.35,-0.575);
					
\draw[line width=1.3mm, blue!60] 
				(8.center) -- (1.9,1);					
\draw[line width=1.3mm, red!60] 
				([xshift=0.2mm, yshift=-1.7mm]6.center) 
				-- ([xshift=3.2mm, yshift=-1mm]2.center) 
				-- ([xshift=1.5mm, yshift=1mm]7.center);

\draw[thick]  (1.center) -- (3.center);
\draw[thick]  (1.center) -- (4.center);
\draw[thick]  (1.center) -- (5.center);
\draw[thick]  (6.center) -- (2.center);
\draw[thick]  (7.center) -- (2.center);
\draw[thick]  (8.center) -- (2.center);

\foreach \i in {1,...,8}{
	\draw[thick, fill=black] (\i.center) circle (0.06);
}
\end{tikzpicture}
\hspace{0.5cm}
\begin{tikzpicture}[scale=0.65]
\node at (0,0) (1) {};
\node at (2,0.5) (2) {};
\node at (-0.8,1) (3) {};
\node at (-1.2,-0.7) (4) {};
\node at (-0.6,-1.9) (5) {};
\node at (3.2,0.9) (6) {};
\node at (2.9,-0.6) (7) {};
\node at (1.8,1.5) (8) {};

\node at ([xshift=12mm, yshift=-3mm]1.center) {$B_u=B_v$};

\draw[line width=1.3mm, red!60] (-0.3,-0.95) -- (1.center)
					-- (2.center) -- (2.6, 0.7);
\draw[line width=1.3mm, red!60] (4.center) -- (1.center)
					-- (-0.4, 0.5);
\draw[line width=1.3mm, red!60] (2.45,-0.05) -- (2.center);

\draw[line width=1.3mm, blue!60] (0.5, 0.305) -- (1.9, 0.655);

\draw[line width=1.3mm, blue!60] 
				([xshift=-1.2mm, yshift=-1.2mm]3.center) 
					-- (-0.32,0.13);
\draw[line width=1.3mm, blue!60] 
				([xshift=-1.5mm, yshift=0.5mm]5.center) 
					-- (-0.35,-0.575);
					
\draw[line width=1.3mm, red!60] 
				(8.center) -- (1.9,1);					
\draw[line width=1.3mm, blue!60] 
				([xshift=0.2mm, yshift=-1.7mm]6.center) 
				-- ([xshift=3.2mm, yshift=-1mm]2.center) 
				-- ([xshift=1.5mm, yshift=1mm]7.center);

\draw[thick]  (1.center) -- (2.center);
\draw[thick]  (1.center) -- (3.center);
\draw[thick]  (1.center) -- (4.center);
\draw[thick]  (1.center) -- (5.center);
\draw[thick]  (6.center) -- (2.center);
\draw[thick]  (7.center) -- (2.center);
\draw[thick]  (8.center) -- (2.center);

\foreach \i in {1,...,8}{
	\draw[thick, fill=black] (\i.center) circle (0.06);
}
\end{tikzpicture}
\caption{On the left, we have the two different initial cases, i.e.,
on the top,~$B_u\neq B_v$, on the bottom,~$B_u = B_v$. In
the middle, the recursive call is made. On the right, we use the
two recursive colorings and swap colors if needed.
}
\label{fig:tree-recursive-NM}
\end{center}
\end{figure}
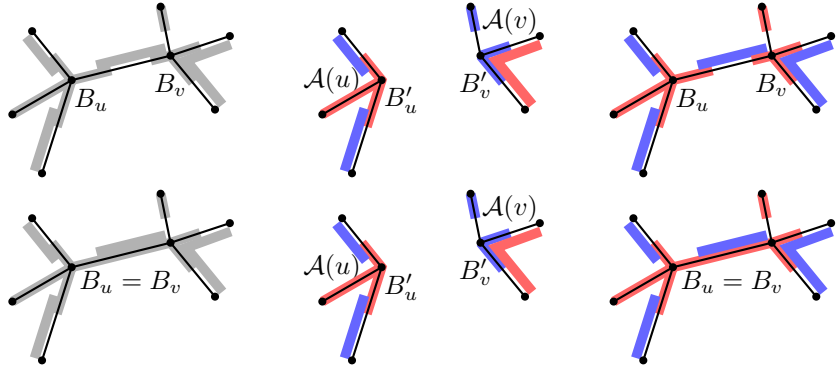
\begin{theorem}\label{thm:NM-balls-on-trees}
$ \NMCN{balls}{trees} (t; n)=2$.
\end{theorem}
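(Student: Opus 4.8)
The plan is to prove the two bounds separately. The lower bound $\NMCN{balls}{trees}(t;n)\geq 2$ is immediate: already on a single edge one can place two overlapping balls, and any point in their common intersection forces two colors. So the content is the matching upper bound, which I would establish by showing that the divide-and-conquer procedure described above always produces a valid NM-coloring with the two colors blue and red. I would argue this by induction on the number of edges of $\treespace$; note that splitting at any edge $e=uv$ yields subspaces $\treespace_u,\treespace_v$ with strictly fewer edges. In the base case $t=0$ the space is a single edge (or node), the balls are intervals, and Lemma~\ref{lem:chain-methods} gives an NM-coloring with two colors. For the inductive step I would fix $e$, invoke the induction hypothesis on $\treespace_u$ and $\treespace_v$ (recalling that the clipped balls are genuine balls, and that the clipped $B_u$ behaves as a ball centered at $u$ with radius $\cov_u(B_u)$), and then check the gluing. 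Since recoloring a subtree only swaps the two colors globally, each recursive NM-coloring survives the alignment step, so it remains to verify that the glued coloring has no monochromatic point.

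To verify the NM property at a point $p$ I would distinguish cases according to the location of $p$ and of the centers of the balls covering it. If $p$ lies strictly inside $\treespace_u$ (or $\treespace_v$) and is covered only by balls of that side, the recursive coloring already handles it. If $p$ lies on $e$, then every ball covering $p$ is either centered on $e$ --- hence lies in $\objects(e)\cup\{B_u,B_v\}$ and is colored by the chain method along the path $\pi$ --- or is centered in $\treespace_u$ (resp.\ $\treespace_v$), in which case it must cover $u$ (resp.\ $v$) and satisfies $p\in B_u$ (resp.\ $p\in B_v$) by Lemma~\ref{lem:connected_core}(i). If at least two chain-set balls cover $p$, the chain coloring on $\pi\ni p$ is already non-monochromatic there; otherwise $p$ is covered by two same-side balls, which together with the assigned ball all cover the incident node (and hence $p$, by maximality of the assigned ball) and are non-monochromatic there by the recursive coloring.

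The crux --- and what I expect to be the main obstacle --- is a point $p$ covered by balls from both sides of $e$, in particular a point $p\in\treespace_u$ reached by a ball centered across $e$. Here I would first show that $p$ is covered by \emph{both} $B_u$ and $B_v$: Lemma~\ref{lem:connected_core}(i) gives $p\in B_u$, while maximality of $B_v$ at $v$ gives $\cov_v(B_v)\geq d(v,p)$ and hence $p\in B_v$. If the edge step colored $B_u$ and $B_v$ differently we are done immediately, since both cover $p$. The delicate situation is when $B_u$ and $B_v$ receive the \emph{same} color and $p$ lies in a branch at $u$ not on the path $\pi$, so the chain method does not see $p$ directly. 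To handle this I would use that $B_u$ maximizes coverage at $u$, so $d(u,p)\leq \cov_u(B_v)\leq \cov_u(B_u)$; consequently the point $q\in\pi$ at the same distance $d(u,p)$ from $u$ along $\pi$'s branch is covered by both $B_u$ and $B_v$ as well. Being monochromatic there, the chain coloring forces a third, oppositely colored ball covering $q$, and this ball must be centered on $e$; being centered on $e$ it is radially symmetric about $u$ and therefore reaches distance $d(u,p)$ into \emph{every} branch at $u$, in particular covering $p$. This supplies the required second color at $p$.

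Finally I would note that the color swaps only permute the two colors within a subtree and hence preserve all the NM relations verified above, and that the symmetric situations (roles of $u,v$ or of $\treespace_u,\treespace_v$ interchanged, and the degenerate case $B_u=B_v$, where $\objects(e)$ is colored opposite to the common ball and the same maximality argument applies) are handled identically. Combining the location cases then shows the glued coloring is non-monochromatic everywhere, completing the induction and yielding $\NMCN{balls}{trees}(t;n)\leq 2$.
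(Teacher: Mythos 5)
Your proposal follows essentially the same route as the paper's proof: the same induction with the chain-method base case, the same recursive split at the edge $e=uv$ with clipped balls and color swaps, and the same verification toolkit, namely Lemma~\ref{lem:connected_core} together with maximality of the assigned balls $B_u,B_v$, plus a mirror point on the longest path $\pi$ (at distance $d(u,p)$ from $u$) used to transfer the chain method's guarantee to points off $\pi$ via the radial symmetry about $u$ of balls centered on $e$. If anything, your treatment of the hardest sub-case---when $B_u$ and $B_v$ both cover $p$ but receive the same color, and you extract a third, oppositely colored ball of $\objects(e)$ at the mirror point which must then reach $p$---is spelled out more explicitly than in the paper, which essentially asserts that the chain coloring resolves that case.
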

\begin{proof}
The coloring obviously uses two colors. It remains to show
it is non-monochromatic. We use induction on~$t$. If~$t=0$,
the coloring is non-monochromatic since it uses the chain method.

Suppose now that~$t\geqslant 1$ and that the claim holds for any
tree space with fewer than~$t$ internal nodes.
Let~$p$ be a point contained in at
least two balls.

If~$p$ is contained in balls
only of~$\objects(v)$, only of~$\objects(u)$,
or only of~$\objects(e)$, it is
contained in at least two balls of different colors. Indeed, the
colorings of~$\objects(v)$ and~$\objects(u)$ are non-monochromatic
since they use the method on a tree with fewer than~$t$ internal
nodes and we can use the induction hypothesis. Moreover~$\objects(e)$
is non-monochromatic due to the chain method.

It remains to consider the case where~$p$ is contained in balls from at least two of the
sets~$\objects(u)$, $\objects(v)$, and~$\objects(e)$.
We distinguish two cases:~$p$ is contained in a ball
of~$\objects(e)$ and~$p$ is not contained in a ball
of~$\objects(e)$.

If~$p$ is contained in a ball~$B$ of~$\objects(e)$,
we can assume without loss
of generality that~$p$ is also contained in a ball
of~$\objects(v)$. By Lemma \ref{lem:connected_core}(i),  we 
have that~$p \in B_v$.

If~$B_u = B_v$
then all balls in~$\objects(e)$ are given a different color
than~$B_v$ hence~$p$ is contained in two balls of different color.
If~$B_u \neq B_v$ then we use the chain method on~$\pi$. Hence
if~$p\in \pi$, it is contained in two balls of different color.
To show that if~$p\notin \pi$ then~$p$ is still contained in two balls
of different colors,
it suffices to notice that for any subset of balls of~$\objects(e)$
in which~$p$ is contained,
the point~$p' \in \pi$ at distance~$d(u,p)$ from~$u$ is contained 
in the same set of
balls from~$\objects(e)$ as~$\pi$ is the longest path containing~$e$. 

On the other hand, if~$p$ is not contained in a ball
of~$\objects(e)$, then it is contained in at least one ball
from~$\objects(u)$ and one from~$\objects(v)$. By 
Lemma~\ref{lem:connected_core} we have that~$p \in B_u \cap B_v$.

We then have two cases. If~$B_u = B_v$, then~$p$ is contained in
another ball of~$\objects(u)$ or~$\objects(v)$, and then the
coloring is non-monochromatic by the induction hypothesis.
Otherwise~$B_u$ and~$B_v$ are part of the
chain~$\objects(e) \cup \{ B_u, B_v \}$, and hence~$p$ is contained
in at least two balls of different color.
  \end{proof}

\mypara{CF-coloring.}
The second algorithm CF-colors~$\objects$
using~$\lceil \log t \rceil + 3$ colors.
As before, define~$\core:=\{B=B(c,r) \mid \exists x: B=B_x \}$.
We explain how to color~$\core$ and then extend the
coloring to~$\objects$.
Let~$r$ be a node whose removal results in
subtrees each of at most~$t/2$ internal nodes.
We color~$B_r$ (if it exists) with color~1.
Let~$\treespace_1,\ldots,\treespace_{{\mathrm deg}(r)}$ be
subtrees resulting from removing~$r$, that is,
the closures of the connected components of $\treespace\setminus\{r\}$.
For each~$i=1,\ldots, {\mathrm deg}(r)$, we recurse
on~$\treespace_i$ with the balls from $\core$
whose centers lie in~$\treespace_i$. In such a recursive call,
we consider a node to be an internal node when it was an internal
node in the original space~$\treespace$ and when it has not yet
been selected as a splitting node in a previous call.  Hence,
when $t=0$ in a recursive call on a subtree $\tree'\subset \treespace$,
then $\treespace'$ must be a single edge both of whose endpoints have
already been treated.

The recursion stops
when there are no more balls left (which must be the case
when we have a recursive call with $t=0$).
Note that the internal nodes are fixed from the beginning, hence
at some point of the recursion, a leaf node might still be
considered internal for the purposes of the recursion.

\begin{lemma}\label{lem:balls-on-trees-CF-core}
The above algorithm CF-colors~$\core$ using~$\lceil \log t \rceil$
colors.
\end{lemma}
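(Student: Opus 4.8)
The plan is to establish two things separately: that the recursion uses at most $\ceil{\log t}$ colors, and that the resulting coloring of $\core$ is conflict-free. The color count is the easy half. Since the splitting node $r$ is chosen so that each subtree $\treespace_i$ contains at most $t/2$ internal nodes, the recursion depth is at most $\ceil{\log t}$; as the color used at a recursive level equals that level's depth, only $\ceil{\log t}$ distinct colors ever appear.

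Before proving conflict-freeness I would first verify that every ball of $\core$ receives exactly one well-defined color. For a ball $B\in\core$, let $N(B):=\{x \text{ internal} : B_x=B\}$ be the set of nodes to which $B$ is assigned. By Lemma~\ref{lem:connected_core}(ii), $N(B)$ is connected: it contains, for each of its nodes, the entire shortest path to the center of $B$. A connected set cannot be separated by deleting a node outside it, so throughout the recursion all of $N(B)$ remains inside a single subtree until the first level at which some node $s_B\in N(B)$ is selected as the splitting node. This $s_B$ is unique, and at that level the ball assigned to $s_B$ among the balls still under consideration is exactly $B$ (its center lies in the current subtree because the path from $s_B$ to that center runs through $N(B)$). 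Hence $B$ is colored precisely once, at the level of $s_B$, and its color is that level index.

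The heart of the argument is then the claim that for any point $p$ with $S_p\neq\varnothing$, the ball of $S_p$ whose color is smallest is unique; this immediately gives conflict-freeness. To prove it I would suppose, for contradiction, that two balls $B,B'\in S_p$ share the minimum color $j$. At each level the subtree containing $p$ has a single splitting node, so at most one color-$j$ ball, namely $B_{s_j}$ for the level-$j$ node $s_j$ on $p$'s recursion path, can come from $p$'s own subtree; therefore one of $B,B'$, say $B'$, equals $B_{s'}$ for a level-$j$ splitting node $s'$ lying in a different subtree. Let $s''$ be the splitting node on $p$'s recursion path at the level where the subtrees of $p$ and of $s'$ separate; its level is strictly below $j$. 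Since $p$ lies in $B'$ while the center of $B'$ lies on the far side of $s''$, Lemma~\ref{lem:connected_core}(i), applied in the subtree where $s''$ is the splitting node, yields $p\in B_{s''}$. But $B_{s''}\in S_p$ has color strictly less than $j$, contradicting minimality. Thus the minimum color in $S_p$ is attained by a unique ball.

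The hard part will be the bookkeeping around the recursion: the notions of ``internal node,'' ``assigned ball,'' and the clipping of balls are all redefined inside each recursive call, so I must check that the ball colored when $s$ is selected genuinely coincides with the intended $B_s$, that its center still lies in the current subtree (this is exactly where Lemma~\ref{lem:connected_core}(ii) and the connectivity of $N(B)$ are needed), and that the hypotheses of Lemma~\ref{lem:connected_core}(i) are verified in the recursive subtree rather than in all of $\treespace$. Once these consistency checks are in place, the color-count bound and the minimum-color uniqueness argument combine to give the lemma.
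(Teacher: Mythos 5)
Your proposal is correct and takes essentially the same route as the paper: the color count comes from the recursion depth, and conflict-freeness is established via the unique-minimum-color property, by showing that two balls of the same color containing a point $p$ force the ball assigned to the splitting node that separated their recursion branches (colored at a strictly smaller level) to contain $p$ as well. The only cosmetic differences are that you invoke Lemma~\ref{lem:connected_core}(i) as a black box where the paper re-derives the coverage inequality inline, and you spell out the well-definedness bookkeeping that the paper leaves implicit.
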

\begin{proof}
The number of colors used comes immediately from the splitting
of~$\treespace$ into trees of at most~$\frac{t}{2}$ internal
nodes. We now show the coloring is indeed confict-free by
showing that it is a unimin coloring: for any point~$p$
the minimum color among the colors of the balls containing~$p$
is unique.
Let~$p\in\treespace$ be a point contained in two balls~$B_1=B(p_1,r_1)$
and~$B_2=B(p_2,r_2)$ both of color~$i$. We show that this implies
the existence of a ball of higher color containing~$p$.
Let~$v_1$ be the node~$B_1$
is assigned to, and ~$v_2$ the node~$B_2$ is assigned to.
Since~$B_1$ and~$B_2$ have the color~$i$, they were contained in
different trees when they were colored in the recursive process.
Let~$v_0$ be the node that disconnected~$v_1$ and~$v_2$ and
let~$B_0$ be the ball assigned to~$v_0$. Note that~$c(B_0)<i$.

We prove that~$p\in B_0$. Let~$\pi$ be the unique simple path
between~$p$ and~$v_0$. It cannot be the case that both~$p_1 \in \pi$
and~$p_2 \in \pi$. Suppose without loss of generality
that~$p_2 \notin \pi$. Let~$d$ be the distance between~$p$ and~$v_0$.
Since~$p\in B_2$, we have that~$\cov_{v_0}(B_2) \geqslant d$.
And since~$\cov_{v_0}(B_0) \geqslant \cov_{v_0}(B_2)$, we
have that~$p\in B_0$, concluding the proof.
  \end{proof}

We now wish to extend the coloring to balls in $\objects\setminus \core$.
To this end,
define~$\treespace' := \treespace \setminus (\bigcup \core)$ to be
the part of $\treespace$ that remains after removing
all points covered by the balls in~$\core$.

We finish the coloring with three more colors (using the chain method
for CF-colorings) as explained next, resulting in~$\lceil \log t \rceil +3$ colors.
We use the following lemma to show that the remaining
balls can be reduced to intervals on disjoint lines.
Note that it does not use tree spaces and can hence be applied
also for planar network spaces.

\begin{lemma}\label{lem:not-core-then-in-single-edge}
For any ball~$B\notin \core$, we have
$\{p \in B \mid p\notin \cup \core \} \subseteq e, $
where~$e$ is the edge containing the center of~$B$.
\end{lemma}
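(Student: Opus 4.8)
For any ball $B \notin \core$, every point of $B$ that is not covered by any ball in $\core$ lies on the single edge $e$ containing the center of $B$.

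The plan is to prove the contrapositive: every point $p\in B$ with $p\notin e$ is in fact covered by some ball of $\core$. Before the main case, I would dispose of the degenerate situation in which the center $c$ of $B$ is itself an internal node $x$. There $\cov_x(B)=r$, and since $B_x$ maximizes coverage at $x$ among balls containing $x$ we get $\cov_x(B_x)\geq r$; writing $B_x=B(c_x,r_x)$, the triangle inequality gives $d(c_x,p)\leq d(c_x,x)+d(x,p)\leq d(c_x,x)+r\leq r_x$ for every $p\in B$, so $p\in B_x\in\core$ and the uncovered set is empty. Hence from now on I may assume $c$ lies in the relative interior of a unique edge $e$ (or, if $c$ is a leaf, on its unique incident edge), so ``the edge containing the center'' is well defined.

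Now take $p\in B$ with $p\notin e$ and fix a shortest path $\pi$ from $c$ to $p$. Since $c\in e$ but $p\notin e$, the path $\pi$ must leave $e$ through one of its two endpoints; call that endpoint $w$. Because $\pi$ continues strictly beyond $w$ to reach $p$, the node $w$ has degree at least two, and as the network space has no degree-2 nodes, $w$ is internal. Being on the shortest path $\pi$, the node $w$ also splits distances additively: $d(c,w)+d(w,p)=d(c,p)$.

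The heart of the proof is then a two-line coverage estimate. Since $w\in\pi$ and $d(c,w)\leq d(c,p)\leq r$, we have $w\in B$, so the assigned ball $B_w$ exists and lies in $\core$. Additivity along $\pi$ yields $d(w,p)=d(c,p)-d(c,w)\leq r-d(c,w)=\cov_w(B)$, and maximality of the assigned ball gives $\cov_w(B_w)\geq\cov_w(B)\geq d(w,p)$. Writing $B_w=B(c_w,r_w)$ and applying the triangle inequality once more, $d(c_w,p)\leq d(c_w,w)+d(w,p)\leq d(c_w,w)+\cov_w(B_w)=r_w$, so $p\in B_w\subseteq\bigcup\core$. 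This contradicts $p\notin\bigcup\core$, proving $p\in e$.

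I expect the main obstacle to be bookkeeping rather than a genuine difficulty: the delicate points are justifying that the exit node $w$ is \emph{internal} (so that $B_w$ is an assigned ball and hence belongs to $\core$), and that the additivity $d(c,w)+d(w,p)=d(c,p)$ is legitimate because $w$ was taken on a fixed shortest path. The coverage inequality itself is exactly the triangle-inequality argument already used for Lemma~\ref{lem:connected_core}. Since the whole computation refers only to geodesic distance and the triangle inequality, it is insensitive to the global topology, which is precisely why the statement can be asserted for arbitrary network spaces rather than only tree spaces.
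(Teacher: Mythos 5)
Your proof is correct and takes essentially the same route as the paper: both arguments consider the endpoint $w$ of $e$ through which the geodesic from the center of $B$ to $p$ exits, and combine the coverage-maximality of the assigned ball $B_w$ with the triangle inequality to conclude $p\in B_w$ (the paper phrases this as the contradiction $\cov_w(B)>\cov_w(B_w)$). Your extra bookkeeping---the degenerate case where the center is itself a node, and the explicit check that $w$ is internal so that $B_w\in\core$---only spells out what the paper leaves implicit.
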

\begin{proof}
Suppose for a contradiction that there is a point~$p\notin e$
contained in~$B$ but not in~$\cup \core$. Consider the endpoint~$v$
of~$e$ belonging to the geodesic from the center of~$B$ to~$p$.
We claim that~$\cov_v(B)>\cov_v(B_v)$, contradicting the definition of~$B_v$. 
Indeed, $\cov_v(B)>d(v,p)$  (since $v$ lies on the geodesic from $B$'s center
to~$p$) and $\cov_v(B_v)<d(v,p)$ (since $p\not\in \core$ and,
hence, $p\notin B_v$).
  \end{proof}

\begin{theorem}
$\CFCN{tree}{balls} (t; n) \leqslant \lceil \log t \rceil +3$.
\end{theorem}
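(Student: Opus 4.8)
The plan is to assemble the two tools already in place: the coloring of $\core$ from Lemma~\ref{lem:balls-on-trees-CF-core}, which is conflict-free (indeed unimin) and uses $\lceil \log t\rceil$ colors, together with the chain method of Lemma~\ref{lem:chain-methods} for intervals, which will dispose of the balls in $\objects\setminus\core$ using three extra colors. First I would color $\core$ with colors $\{1,\ldots,\lceil\log t\rceil\}$ exactly as in Lemma~\ref{lem:balls-on-trees-CF-core}, so that for every point the smallest color among balls of $\core$ containing it is attained exactly once. Then I would reserve three fresh colors, disjoint from those used on $\core$, for the remaining balls.

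To color $\objects\setminus\core$ I would invoke Lemma~\ref{lem:not-core-then-in-single-edge}: for each $B\notin\core$, the part of $B$ lying outside $\bigcup\core$ is contained in the single edge $e$ holding the center of $B$. Processing one edge $e$ at a time, I would treat the intersections with $e$ of all balls of $\objects\setminus\core$ centered on $e$ as intervals on the line supporting $e$, and CF-color them with the chain method, reusing the same three colors on every edge. The reuse across edges is harmless because, as shown below, any point the chain method must serve lies interior to a single edge and is covered only by balls centered on that edge, so distinct edges never interact.

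The remaining, and most delicate, step is to check that the combined coloring is conflict-free. Take a point $p$ lying in at least one ball. If $p\in\bigcup\core$, then a ball of $\core$ contains $p$, and since every $\core$-color is smaller than every chain-method color, the globally smallest color among all balls through $p$ is realized inside $\core$; by the unimin property it is unique, so $p$ sees a uniquely colored ball. If $p\notin\bigcup\core$, I would first observe that $p$ is not an internal node of $\treespace$: an internal node covered by some ball is assigned its maximal-coverage ball $B_p\in\core$, which would place $p$ in $\bigcup\core$. Thus $p$ lies in a unique edge $e$, and by Lemma~\ref{lem:not-core-then-in-single-edge} every ball containing $p$ is a non-core ball centered on $e$, i.e.\ precisely one of the intervals the chain method colored on $e$; Lemma~\ref{lem:chain-methods} then yields a uniquely colored interval through $p$. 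In both cases $p$ is served, so the coloring is conflict-free and uses $\lceil\log t\rceil+3$ colors in total.
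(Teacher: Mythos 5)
Your proof is correct and follows essentially the same route as the paper: CF-color $\core$ with $\lceil\log t\rceil$ colors via Lemma~\ref{lem:balls-on-trees-CF-core}, use Lemma~\ref{lem:not-core-then-in-single-edge} to reduce the uncolored balls to intervals lying in single edges, and finish with the three-color chain method on a disjoint palette. The only (harmless) differences are that you chain-color the intersections $B\cap e$ rather than the pieces clipped to $\treespace\setminus\bigl(\bigcup\core\bigr)$, and that you spell out the final conflict-freeness check (core colors beating chain colors at points of $\bigcup\core$, and non-core balls through a point outside $\bigcup\core$ all being centered on that point's edge) more explicitly than the paper does.
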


\subsection{Tree spaces: the lower bound}
\begin{lemma}
$\CFCN{tree}{balls} (t; n) \geqslant \left\lceil\log (t+1) \right\rceil.$
\end{lemma}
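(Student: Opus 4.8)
The plan is to transplant the path lower bound just established for $\CFCNtreepath$ into the ball setting. The difficulty is that a ball is not a path: a ball centered at a leaf fans out sideways at every branch point, so I cannot simply declare leaf-balls to ``be'' root-to-leaf paths. My idea is to choose the \emph{edge lengths} of the tree space so cleverly that, although the balls do spill a little into sibling edges, they never reach any internal node off their own root path. Then, restricting attention only to the constraints imposed at internal nodes (a valid subset of all the conflict-free constraints, which is all a lower bound needs), the ball instance presents exactly the same incidence structure as the root-to-leaf paths, and the earlier induction goes through verbatim.

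Concretely, given $t$ I set $L:=t+1$ and take $\TreeSpace$ to be a balanced rooted binary tree with $L$ leaves, which has exactly $t$ internal nodes (a single degree-two root can be repaired by attaching one pendant leaf, changing neither the count nor the argument). I give every edge from a depth-$j$ node to its children length $\ell_j:=2^j$, and for each leaf $v$ I take $B_v:=B(v,d(v,\myroot))$, so that $\objects:=\{B_v\}$ and every ball reaches exactly up to the root. The key geometric claim I would prove is: for any leaf $v$ and any ancestor $u$ of $v$ at depth $j$, the radius of $B_v$ left over at $u$ is $d(u,\myroot)=\sum_{i<j}\ell_i=2^j-1$, which is strictly less than the length $\ell_j=2^j$ of every downward edge at $u$. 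Hence $B_v$ can never cross a branch point into a sibling subtree; writing $a=\mathrm{lca}(v,w)$ for an internal node $w$, one checks $w\in B_v$ iff $d(a,w)\le d(a,\myroot)$, and by the chosen lengths this holds iff $a=w$, i.e. iff $w$ is an ancestor of $v$. Therefore at every internal node $w$ the set of balls containing $w$ is exactly $\{B_v : v \text{ a leaf in the subtree of } w\}$, which is precisely the incidence of the root-to-leaf paths used before.

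With this in hand the induction mirrors the path proof. Since all balls contain $\myroot$, any conflict-free coloring forces a ball $B^\ast$ whose color $c^\ast$ is unique among \emph{all} balls. Its defining leaf lies in one of the two subtrees hanging from the root; letting $\TreeSpace'$ be the other subtree with root $r'$, the geometric claim guarantees that no ball outside $\TreeSpace'$ contains $r'$, so the coloring restricted to $\{B_v : v\in\TreeSpace'\}$ is a conflict-free coloring of the analogous sub-instance and avoids the color $c^\ast$. Writing $g(L)$ for the number of colors forced on a balanced instance with $L$ leaves, this yields $g(L)\ge 1+g(\floor{L/2})$ with $g(1)=1$ (since the unique ball may sit in the larger of the two subtrees, I can only rely on the smaller one). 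This recurrence solves to $g(L)\ge \floor{\log_2 L}+1$, and since $\floor{\log_2 L}+1\ge \ceil{\log_2 L}=\ceil{\log(t+1)}$, the bound $\CFCNtreeball(t;n)\ge \ceil{\log(t+1)}$ follows.

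The main obstacle, and the heart of the proof, is the geometric-length trick of the second paragraph: everything hinges on showing that the residual radius $2^j-1$ at a depth-$j$ node is beaten by the downward edge length $2^j$, so that balls do not ``hump over'' branch points and the internal-node incidence collapses exactly onto the path incidence. Once that is in place the remaining steps — the unique-color induction and the arithmetic $\floor{\log_2 L}+1\ge\ceil{\log_2 L}$ — are routine, the latter being the reason the clean bound comes out as a ceiling rather than a floor.
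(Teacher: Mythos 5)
Your proof is correct, but it takes a genuinely different route from the paper. The paper's construction is a ``comb'': a path $p_1,\ldots,p_{t+2}$ of unit-spaced nodes, each carrying a tooth of length $t+2$, with $t+1$ balls of radius $t+2$ centered at points $c_i=(i+\tfrac23,0)$ on the spine; one then checks that the hyperedges induced by points of the space are exactly the sets of consecutively-indexed balls $\{B_i,\ldots,B_j\}$, so the known $\ceil{\log(t+1)}$ lower bound for CF-coloring points with respect to intervals (Even \emph{et al.}) applies as a black box. You instead realize a different hypergraph --- the root-to-leaf path hypergraph of a balanced binary tree --- inside the ball setting, using exponentially growing edge lengths $\ell_j=2^j$ so that the residual radius $2^j-1$ of a leaf ball at a depth-$j$ ancestor is beaten by the downward edge length $2^j$; your incidence claim ($w\in B_v$ iff $w$ is an ancestor of $v$, verified via $d(a,w)\leq d(a,\myroot)$ at the lca $a$) is sound, restricting to internal-node constraints is legitimate for a lower bound, and the recurrence $g(L)\geq 1+g(\floor{L/2})$ with the globally-unique color at the root mirrors the paper's own path induction correctly. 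What each approach buys: the paper's comb keeps all lengths uniform and polynomially bounded and gets the bound by citation, making the proof short; yours is self-contained within the paper (it reuses the earlier $\CFCNtreepath$ induction rather than an external result) and in fact yields the marginally stronger bound $\floor{\log_2(t+1)}+1$, which exceeds $\ceil{\log(t+1)}$ when $t+1$ is a power of two, at the price of exponentially large edge lengths and the small bookkeeping with the degree-two root (your pendant-leaf repair handles this). Both constructions use $t+1$ balls, so the dependence on $n$ is identical.
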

\begin{proof}
Let~$\treespace$ be as follows. We take~$t+2$ points~$p_1,\ldots,p_{t+2}$
in the plane, with~$p_i=(i, 0)$ for each~$i=1,\ldots, t+2$, and we
link consecutive points with a unit distance segment. We then
take~$t+2$ additional points~$p'_1,\ldots, p'_{t+2}$,
with~$p'_i=(i,t+2)$, and for each~$i=1,\ldots,t+2$ we link~$p_i$
and~$p'_i$ with a segment of length~$t+2$. Note that~$p_1$
and~$p_{t+2}$ do not count as internal nodes as their degree is two.
Finally, we place~$t+1$
balls~$B_1=B(c_1,t+2),\ldots, B_{t+1}=B(c_{t+1}, t+2)$,
 for all~$i=1,\ldots,t+1$, with~$c_i=(i+\frac{2}{3}, 0)$,
see Fig.~\ref{fig:comb}.
\begin{figure}
\begin{center}
\begin{tikzpicture}[scale=0.65]
\draw[line width=1.3mm, red!50, shorten >=-0.4mm]
		(1,2.34) -- (1,0) -- (5,0) -- (5,1.66);
\draw[line width=1.3mm, red!50, shorten >=-0.4mm]
		(2,0) -- (2,3.34);
\draw[line width=1.3mm, red!50, shorten >=-0.4mm]
		(3,0) -- (3,3.66);
\draw[line width=1.3mm, red!50, shorten >=-0.4mm]
		(4,0) -- (4,2.66);
\node at (2.5, 3) {$B_2$};

\foreach \i in {1,2,...,5}{
	\node at (\i,0) (\i) {};
	\draw[thick, fill=black] (\i,0) circle (0.06);
	\draw[thick, fill=black] (\i,4) circle (0.06);
	\draw[thick] (\i,0) -- (\i,4);
	\node at ([yshift=-3mm]\i.center) {$p_\i$};
	\node at ([yshift=43mm]\i.center) {$p'_\i$};
}

\draw[thick] (1.center) -- (2.center) -- (3.center)
			 -- (4.center) -- (5.center);

\foreach \i in {1,2,...,4}{
	\draw[thick, fill=white]
			([xshift=6.6mm]\i.center) circle (0.06);
	\node at ([xshift=6.6mm, yshift=3mm]\i.center) {$c_\i$};
}
\end{tikzpicture}
\end{center}
\caption{Example of the lower bound construction with~$t=3$.
For clarity purposes, only~$B_2$ is displayed, in red. }
\label{fig:comb}
\end{figure}
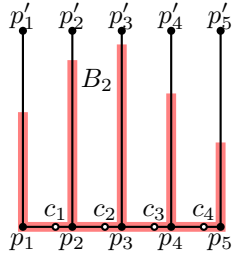
Consider the hypergraph~$\mathcal{H}$ whose nodes are the balls~$B_i$,
and whose hyperedges are the subsets of balls such that there is a 
point~$p\in \treespace$ contained in exactly that subset
(and no other balls). We claim (and will prove below) that the set
of hyperedges is exactly the
set~$\{ \{ B_i,B_{i+1},\ldots, B_j \} \mid i \geqslant j \}$.
In other words, there is a hyperedge for a subset of balls if and
only if there is an interval on the $x$-axis containing exactly
the centers of these balls. Hence, we can apply
the~$\ceil{\log (t+1)}$ lower bound for CF-coloring points
with respect to intervals~\cite{even-cf-03}. 

To prove the claim, note that if~$p_i$ is the ball center
nearest to~$p$ then~$d(p,p_1)> d(p,p_2)> \cdots > d(p,p_i)$
and~$d(p,p_{i+1})> \cdots > d(p,p_{t+2})$, which implies that
any hyperedge is of the form $\{B_i,B_{i+1},\ldots,B_j\}$.
On the other hand, the point~$(\floor{(j+i)/2},t+2-(j-i)/2)$ 
is contained in
exactly the balls~$B_i,B_{i+1},\ldots,B_j$.
  \end{proof}

\subsection{Planar network spaces}\label{sec:balls-on-networks}

\mypara{NM-coloring.}
We first explain how to NM-color balls on a planar network space~$\graph$.
Let again~$\core$ be the set~$\{B=B(c,r) \mid \exists x: B=B_x \}$.
We create a graph~$\mathcal{G}_\core$
whose node set is~$\core$ and whose edge set is defined as
follows: there is an edge between~$B$ and~$B'$ if and only if
there is an edge $vv'$ in~$\treespace$
with~$B_v=B$ and~$B_{v'}=B'$.
It follows from Lemma~\ref{lem:connected_core} that for any
ball~$B$, the set of nodes of~$\graph$ to which~$B$ is assigned,
together with the edges between these nodes,
is a connected set.
Therefore,~$\mathcal{G}_\core$ is planar as well since
its nodes correspond to disjoint connected subspaces in the planar
space~$\graph$.
We now use the Four Color Theorem to color~$\mathcal{G}_\core$ and
we give each ball in~$\core$ the same color as the corresponding
node in~$\mathcal{G}_\core$.

\begin{lemma}\label{lem:net-core}
The coloring on~$\core$ is non-monochromatic and uses at most four colors.
\end{lemma}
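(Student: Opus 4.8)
The plan is to verify the two assertions separately. The bound of four colors is immediate: the coloring gives each ball the color of its node in the planar graph $\mathcal{G}_\core$, and the Four Color Theorem colors $\mathcal{G}_\core$ with four colors. The real content is the non-monochromatic property. Since the four-coloring of $\mathcal{G}_\core$ is proper, any two balls that are adjacent in $\mathcal{G}_\core$ receive different colors; hence it suffices to show that every point $p$ lying in at least two balls of $\core$ lies in two core balls whose $\mathcal{G}_\core$-nodes are adjacent. By the definition of $\mathcal{G}_\core$, this amounts to producing an edge $ww'$ of $\graph$ with $B_w\neq B_{w'}$ and $p\in B_w\cap B_{w'}$.

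First I would isolate the coverage tool underlying everything, in the spirit of Lemma~\ref{lem:connected_core}(ii): if a ball $B$ contains $p$ and the geodesic from $p$ to the center of $B$ leaves the edge $e\ni p$ through the endpoint $w$, then $\cov_w(B)\geq d(w,p)$, and therefore the assigned ball $B_w$ also contains $p$ (since $\cov_w(B_w)\geq\cov_w(B)$ and $\cov_p(B_w)\geq\cov_w(B_w)-d(w,p)$ by the triangle inequality). Fix such a $p$ on an edge $e=uv$. Picking two distinct core balls through $p$, each reaches its center through $u$ or through $v$. If the two endpoints used are different, then both $B_u$ and $B_v$ contain $p$; when $B_u\neq B_v$ the edge $uv$ itself is the desired adjacency and we are done. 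The problematic situation is $B_u=B_v$ --- a single ball whose assigned region straddles $p$, so that $e$ contributes no edge of $\mathcal{G}_\core$ --- together with the case where both chosen balls reach their centers through the same endpoint.

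To resolve the hard case I would use a walking argument toward the center of a second ball. Let $B'\neq B_v$ be a core ball containing $p$ whose center $c'$ is reached from $p$ through $v$, so that $\cov_v(B')\geq d(v,p)$. Walking from $v$ along the geodesic toward $c'$, at each node $w$ one has $\cov_w(B')=\cov_v(B')+d(v,w)$ while $d(w,p)=d(w,v)+d(v,p)$; combining these with $\cov_w(B_w)\geq\cov_w(B')$ gives $\cov_p(B_w)\geq\cov_v(B')-d(v,p)\geq 0$, so every assigned ball encountered along this walk still contains $p$. The walk begins at $v$, where the assigned ball is $B_v$, and can be continued (using Lemma~\ref{lem:connected_core}(ii) to identify the assignment near $c'$) to a node assigned $B'\neq B_v$; hence somewhere along it the assignment must change across an edge $ww'$, yielding $B_w\neq B_{w'}$ with $p\in B_w\cap B_{w'}$, as required. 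I expect the main obstacle to be exactly this $B_u=B_v$ regime, and the crux is the cancellation of the $d(v,w)$ terms above, which is what keeps $p$ covered all along the walk. The remaining work is routine bookkeeping for the boundary situations --- the center $c'$ lying in the interior of $e$, the point $p$ coinciding with a node, and leaf edges --- each of which is handled by the same coverage inequality or by the observation (already used to establish planarity) that the assigned region of each ball is connected.
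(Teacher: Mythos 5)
Your proof is correct, and it takes a genuinely different route from the paper's. The paper argues directly about colors: given two same-colored balls $B_1,B_2\ni p$, it follows shortest paths $\pi_1,\pi_2$ from $p$ to nodes $v_1,v_2$ \emph{to which the balls are assigned}, invokes properness of the four-coloring twice along those paths (once to rule out that every node of $\pi_1\cup\pi_2$ is assigned $B_1$ or $B_2$, once to extract a node $v$ with $c(B_v)\neq c(B_1)$), and finally shows $p\in B_v$ via the inequality $\cov_v(B_v)\geq\cov_v(B_1)\geq d(v,p)$. You instead prove a coloring-free statement---every point lying in two core balls lies in two core balls that are adjacent in $\mathcal{G}_\core$---and let properness finish in one line; moreover, your walk heads toward the \emph{center} $c'$ of the second ball rather than toward an assigned node. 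That choice is what makes your key inequality airtight: for $w$ on the geodesic from $p$ through $v$ to $c'$, the cancellation $\cov_w(B')=\cov_v(B')+d(v,w)$ yields $\cov_w(B_w)\geq d(w,p)$, hence $p\in B_w$, at every step of the walk. The corresponding step in the paper, namely $\cov_v(B_1)\geq d(v,p)$ for $v$ on a shortest path to the assigned node $v_1$ rather than to the center, is exactly the delicate point of that argument, since a shortest path to $v_1$ need not follow the geodesic to the center (the two can diverge around a cycle); your version sidesteps this issue entirely, so it is, if anything, more robust. The ``routine bookkeeping'' you defer does work out, with one point worth making explicit: when $c'$ lies in the interior of an edge $f=yz$ and your walk ends at $y$, the node that Lemma~\ref{lem:connected_core}(ii) certifies as being assigned $B'$ may be the far endpoint $z$, so you must extend the walk across $f$; this is harmless because $B_z=B'\ni p$, and the assignment then still changes across an edge of $\graph$ whose two assigned balls both contain $p$. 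Likewise, a core ball through $p$ whose center lies on $e=uv$ itself exits through neither endpoint, but Lemma~\ref{lem:connected_core}(ii) forces such a ball to equal $B_u$ or $B_v$, which folds it into your easy case.
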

\begin{proof}
It is clear that the coloring uses at most four colors. Now let~$p$
be a point contained in two balls~$B_1$ and~$B_2$ of the same color.
Let~$v_1$ and~$v_2$ be nodes of~$\graph$ with~$B_1=B_{v_1}$
and~$B_2=B_{v_2}$. Let~$\pi_1$ and~$\pi_2$
be two shortest paths between~$p$ and~$v_1, v_2$, respectively.
If all the nodes in~$\pi_1 \cup \pi_2$ are either assigned~$B_1$
or~$B_2$, then there is an edge between~$B_1$ and~$B_2$
in~$\mathcal{G}_\core$
and hence~$B_1$ and~$B_2$ are given different colors. Therefore
there must be a node~$v$ in~$\pi_1 \cup \pi_2$ (we assume
without loss of generality that~$v\in \pi_1$)
with~$B_v\notin \{ B_1, B_2 \}$ and~$c(B_v)=c(B_1)$. Note that
if~$c(B_v)=c(B_1)$ for all~$v\in \pi_1$, then there must be an edge
between two balls of the same color in~$\mathcal{G}_\core$ which
is a contradiction, hence there must be a vertex~$v\in \pi_1$
with~$c(B_v)\neq c(B_1)$. Since~$\pi_1$ is a shortest path
between~$v_1$ and~$p$, and since~$v\in \pi_1$, we have that~$\pi_1$
contains a shortest path between~$v$ and~$p$.
Moreover,~$\cov_v(B_v) \geqslant \cov_v(B_1) \geqslant d(v,p)$,
which implies that~$p \in B_v$ and
concludes the proof.
  \end{proof}

We now wish to extend the coloring to balls in $\objects\setminus \core$.
To this end, define $\graph' := \graph \setminus (\bigcup \core)$ to be
the part of $\graph$ that remains after removing all points covered by the
balls in~$\core$. The proof of the following lemma is similar to the proof
of Lemma~\ref{lem:not-core-then-in-single-edge}.

\begin{lemma}
Consider a ball $B\in \objects\setminus\core$, and
let~$B' := B \cap \graph'$. Then $B'$ is contained in a single
edge of $\graph'$.
\end{lemma}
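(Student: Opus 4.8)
The plan is to reduce the statement to a connectivity claim and then re-run the coverage-comparison argument of Lemma~\ref{lem:not-core-then-in-single-edge}, with one genuinely new ingredient to deal with shortcuts. First I would invoke Lemma~\ref{lem:not-core-then-in-single-edge} (whose proof, as the text notes, does not use the tree structure) to conclude that $B'$ is contained in the single edge $e$ of $\graph$ that contains the center $c$ of $B$. I would also record that $B'$ contains no node of $\graph$: a node $x\in B'$ would lie in some ball, so $B_x$ exists and $x\in B_x\subseteq\bigcup\core$, contradicting $x\notin\bigcup\core$. Hence $B'$ lies in the interior of $e$. Since a connected subset of $\graph'$ lies in a single connected component, i.e.\ a single edge of $\graph'$, it suffices to prove that $B'$ is \emph{connected}. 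Parametrising $e$ as $[0,L]$ with endpoints $u=0$, $u'=L$, this means showing that for any $p_1<p_2$ in $B'$ the whole sub-segment $[p_1,p_2]$ lies in $B'$.

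The workhorse is a single coverage estimate, the exact analogue of the inequality in Lemma~\ref{lem:not-core-then-in-single-edge}: if a ball $\beta$ contains a point $z\in e$ and some shortest path from the center of $\beta$ reaches $z$ ``from the $u$-side,'' i.e.\ $d(\text{center},z)=d(\text{center},u)+\mathrm{pos}(z)$, then $\cov_u(\beta)\geq\mathrm{pos}(z)$, and therefore the assigned ball $B_u$ covers the entire stretch $[0,\mathrm{pos}(z)]$; symmetrically for the $u'$-side. I would prove this once and reuse it. Now assume $B'$ is disconnected and let $g:=\inf\{x\in[p_1,p_2]\mid x\notin B'\}$, so that $[p_1,g)\subseteq B'$. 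Because $B$ is closed and $[p_1,g)\subseteq B$, we have $g\in B$, which already rules out $g$ being an interior non-$B$ point; thus either $g\in\partial B$ with points immediately to its right lying outside $B$, or $g\in\bigcup\core$.

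In the first case ($g\in\partial B$), a shortest path from $c$ to $p_2$ cannot approach $p_2$ from the $u$-side: that approach would trace $[0,p_2]$ and hence contain points just right of $g$, which are outside $B$ yet would be forced to distance $\le d(c,p_2)\le r$ from $c$. So $p_2$ is reached from the $u'$-side, and the estimate gives $B_{u'}\supseteq[\mathrm{pos}(p_2),L]\ni p_2$, contradicting $p_2\in B'$. In the second case, $g$ is the leftmost covered point, so it sits on the left boundary of some core ball $B_0=B_w$ and $d(\cdot,\text{center of }B_0)$ is decreasing through $g$. If a shortest path from $B_0$'s center reaches $g$ from the $u'$-side, the estimate again yields $B_{u'}\supseteq[\mathrm{pos}(g),L]\ni p_2$, a contradiction.

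The remaining possibility, and the main obstacle absent in tree spaces, is that the center of $B_0$ lies in the \emph{interior} of $e$ (a shortcut through the planar network can make $B\cap e$, and hence a priori $B'$, disconnected). Here I would invoke Lemma~\ref{lem:connected_core}(ii): a shortest path from $w$ to the center of $B_0$ must enter $e$ through an endpoint, which is then also assigned $B_0$, so $B_0\in\{B_u,B_{u'}\}$. Entering through $u$ is impossible, since it would force the $u$-to-center along-edge path to be shortest, i.e.\ $d(u,\text{center})=\mathrm{pos}(\text{center})$, whereas $B_0=B_u$ requires $u\in B_0$, i.e.\ $d(u,\text{center})\le r_0=\mathrm{pos}(\text{center})-\mathrm{pos}(g)<\mathrm{pos}(\text{center})$. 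Hence the path enters through $u'$, giving $B_0=B_{u'}$ and, crucially, no shortcut from $u'$ to the center along $e$; this makes $d(\cdot,\text{center})$ equal to the along-edge distance on all of $[\mathrm{pos}(g),L]$, so $B_0$ covers this stretch contiguously up to $u'$ and in particular covers $p_2$, the final contradiction. I expect this last step to be the delicate part: ruling out interior-centered core balls that ``punch a hole'' in $B'$ is exactly where the connectivity of assigned balls from Lemma~\ref{lem:connected_core}, rather than plain distance bookkeeping, is indispensable.
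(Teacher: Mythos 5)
Your overall plan is the right one, and in fact the paper offers essentially no details here: it only remarks that the proof is ``similar to'' that of Lemma~\ref{lem:not-core-then-in-single-edge}. Your reduction (apply Lemma~\ref{lem:not-core-then-in-single-edge} to get $B'\subseteq e$, observe $B'$ contains no node of $\graph$, reduce to connectedness of $B'$ inside $e$), your endpoint coverage estimate, and your use of Lemma~\ref{lem:connected_core}(ii) to kill core balls centered in the interior of $e$ are exactly the ingredients needed, and your second case ($g\in\bigcup\core$) is complete and correct. The gap is in the first case ($g\in\partial B$): the step ``so $p_2$ is reached from the $u'$-side'' is a false dichotomy. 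A shortest path from $c$ to $p_2$ has a third option, the direct along-edge path inside $e$, which passes through neither endpoint and to which your workhorse estimate does not apply. Your trace argument rules out this direct path only when it crosses the gap, i.e.\ only when $\mathrm{pos}(c)\leq\mathrm{pos}(g)$, and nothing in the case hypotheses forces that: the center of $B$ can lie on the far side of the gap, with $B$ reaching $p_1$ only through a shortcut. Concretely, take $e$ of length $100$, an outside path of length $2$ between $u$ and $u'$, $c$ at position $90$ and $r=20$; then $B\cap e=[0,8]\cup[70,100]$, so with $p_1\in[0,8]$, $p_2\in[70,100]$ one has $g=8<\mathrm{pos}(c)$, the shortest path from $c$ to $p_2$ is the direct one, and $\cov_{u'}(B)=10$ gives no control over $p_2$ whatsoever. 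In this situation your argument produces no contradiction at $p_2$, so the proof as written is incomplete.

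The missing case is not vacuous for free; it must be refuted, and the refutation runs through $p_1$ rather than $p_2$: if $\mathrm{pos}(c)>\mathrm{pos}(g)$, then both the direct path from $c$ to $p_1$ and any path through $u'$ traverse points just to the right of $g$, which lie outside $B$, so every shortest path from $c$ to $p_1$ must pass through $u$; your own estimate then yields $B_u\supseteq[0,\mathrm{pos}(p_1)]\ni p_1$, contradicting $p_1\in B'$ (in the example above, $B_u$ covers all of $[0,8]$). So the fix is the symmetric, two-sided application of your workhorse depending on which side of the gap $c$ lies --- precisely the kind of ``direct approach'' subcase you did identify and handle in the second case via Lemma~\ref{lem:connected_core}(ii), but overlooked in the first. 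One further small point worth a sentence: whenever you conclude $B_u\in\core$ or $B_{u'}\in\core$, you implicitly need that endpoint to be an internal node; this is automatic because a geodesic passing through an endpoint and continuing must use a second edge there, so by the no-degree-two convention the node has degree at least three, but it should be said.
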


For each edge $e$ of $\graph'$, let $\objects(e)$ denote the set of balls contained in~$e$.
Let $u$ and $v$ denote the endpoints of the edge in $\graph$ containing~$e$.
We color the uncolored balls in~$e$ using
the chain method with two colors not equal to~$c(B_u)$ and~$c(B_v)$.
We have now colored the balls in $\core$ as well as the balls in
$\objects\setminus\core$ that lie at least partially in $\graph'$.
Next we explain how to color the remaining balls, which are fully covered by the balls in $\core$.

\begin{lemma}\label{lem:3-balls}
Any uncolored ball is contained in the union of at most three balls.
\end{lemma}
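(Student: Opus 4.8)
The plan is to fix the uncolored ball $B=B(c,r)$ and recall that, being uncolored, it is entirely covered by the balls of $\core$; its center $c$ lies on a single edge $e=uv$ of $\graph$ (the edge of Lemma~\ref{lem:not-core-then-in-single-edge}). I would exhibit three balls of $\core$ whose union contains $B$, namely the two assigned balls $B_u$ and $B_v$ at the endpoints of $e$, together with one ``central'' ball $B^{*}$, which I take to be a ball of $\core$ of maximum coverage at $c$ (it exists since $c\in\bigcup\core$). The target claim is then $B\subseteq B_u\cup B_v\cup B^{*}$.

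First I would classify the points of $B$ by the direction in which a shortest path from $c$ leaves $e$. Since $c$ lies on $e$, every shortest path from $c$ either stays on $e$ or leaves $e$ through $u$ or through $v$. For a point $p\in B$ whose shortest path from $c$ passes through $u$ we have $d(c,p)=d(c,u)+d(u,p)$, so $d(u,p)\le \cov_u(B)\le \cov_u(B_u)$, and the triangle inequality gives $d(\text{center of }B_u,\,p)\le r_u$, i.e. $p\in B_u$; this is exactly the coverage argument already used in Lemma~\ref{lem:connected_core}. Symmetrically, the points leaving through $v$ lie in $B_v$. Hence only the on-edge part $B\cap e$, an interval $[a,b]\ni c$, remains to be covered, and it suffices to show $B\cap e\subseteq B^{*}\cup B_u\cup B_v$.

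For the on-edge part I would look, for each $q\in B\cap e$, at the ball $B_q\in\core$ of maximum coverage at $q$. A useful structural observation here is that any core ball whose trace meets the interior of $e$ must contain $u$ or $v$: it is assigned to some node, and a shortest path to that node leaves $e$ through an endpoint, forcing the ball to contain that endpoint. If the shortest path from $q$ to the center of $B_q$ leaves $e$ through $u$ (resp.\ $v$), the computation above, now comparing coverages at $u$ (resp.\ $v$), yields $q\in B_u$ (resp.\ $q\in B_v$). The remaining case is when $B_q$ is centered in the interior of $e$; here I would compare $\cov_c(B^{*})$ with $\cov_c(B_q)$ and use that coverage is $1$-Lipschitz along the edge. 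When the center of $B_q$ lies on the side of $c$ opposite to $q$, this gives $d(c,q)\le \cov_c(B_q)\le\cov_c(B^{*})$, hence $q\in B^{*}$. I expect the main obstacle to be the complementary sub-case, where $B_q$ is centered on the same side of $c$ as $q$: the reach of a ball centered inside $e$ toward an endpoint is \emph{not} controlled by its coverage at that endpoint, so in this sub-case $q$ must instead be absorbed by $B_u$ or $B_v$, and establishing this requires carefully exploiting the maximality built into the definitions of $B_u$, $B_v$, and $B^{*}$, together with the position of $B_q$'s center relative to $c$. Once the interval $B\cap e$ is shown to be covered, the three balls $B_u$, $B_v$, $B^{*}$ contain all of $B$, which proves the lemma.
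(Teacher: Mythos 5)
Your first reduction is correct and well executed: if $p\in B$ and a shortest path from $c$ to $p$ leaves $e$ through $u$ (resp.\ $v$), then comparing coverages at $u$ (resp.\ $v$) and applying the triangle inequality puts $p$ in $B_u$ (resp.\ $B_v$). The problem is the on-edge part, where your proof stops being a proof: you explicitly leave open the sub-case in which the covering core ball $B_q$ is centered in the interior of $e$ on the same side of $c$ as $q$, saying only that it ``requires carefully exploiting the maximality'' of the various balls. That unresolved sub-case is a genuine gap. It is also unnecessary, because the tool that closes it is Lemma~\ref{lem:connected_core}(ii), which you invoke but only in a weakened form: your structural observation extracts from it merely that a core ball meeting the interior of $e$ \emph{contains} $u$ or $v$, whereas the lemma gives much more. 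If $B_q\in\core$ has its center $p$ in the interior of $e$, write $B_q=B_x$ for some node $x$ (that is exactly what membership in $\core$ means); any shortest path from $x$ to $p$ must pass through $u$ or $v$, since $x$ is a node and $p$ is interior to $e$; by Lemma~\ref{lem:connected_core}(ii) applied to that endpoint, say $u$, we get $B_q=B_u$. So a core ball centered in the interior of $e$ \emph{is} $B_u$ or $B_v$, and in your entire ``remaining case'' one has $q\in B_q\subseteq B_u\cup B_v$ immediately---no distinction between sides, no Lipschitz argument, and no ball $B^{*}$ at all. (Note also that your resolved ``opposite side'' sub-case tacitly assumes that the shortest path from $B_q$'s center to $q$ runs along $e$ through $c$, and that $B\cap e$ is an interval; both can fail in a planar network, where paths may leave $e$ and re-enter it. The repair above makes both points moot.)

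With that repair, your classification in fact proves $B\subseteq B_u\cup B_v$, i.e., two balls suffice, which is stronger than the stated lemma. This is also a genuinely different route from the paper's: the paper distinguishes whether $B$ lies in a single edge---in which case the two core balls extending farthest into that edge from its endpoints cover $B$---or contains a node $v$, in which case $B\setminus B_v$ is reduced to the single-edge case, for a total of three balls. Your point-by-point classification, once fixed, is more direct and quantitatively better; but as submitted it is incomplete at precisely the step you flagged as the main obstacle.
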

\begin{proof}
Any uncolored ball~$B$ is contained in~$\cup \core$.
If~$B$ is fully contained in a single edge~$e$ of~$\graph$,
it must be covered by the two balls
from~$\core$ extending the farthest into~$e$, starting from
each of the two endpoints. If not, let~$v$ be a node contained in~$B$.
Now~$B \setminus B_v$ is contained in a single edge~$e$ of~$\graph$
and so $B\setminus B_v$ can be covered by two balls (as just explained), 
which implies that $B$ can be covered by three balls.
  \end{proof}

Using this lemma, we can easily finish the NM-coloring.

\begin{theorem}
$\NMCN{planar}{balls} (t; n) = 4$.
\end{theorem}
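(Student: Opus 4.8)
The plan is to establish the two inequalities $\NMCN{planar}{balls}(t;n)\le 4$ and $\NMCN{planar}{balls}(t;n)\ge 4$ separately. The upper bound is essentially an assembly of the lemmas already proved in this subsection together with one extra coloring step, while the lower bound requires exhibiting a single planar network space and four balls that provably force four colors.

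For the upper bound I would proceed in three coloring stages. First, color $\core$ non-monochromatically with at most four colors using Lemma~\ref{lem:net-core}. Second, for each edge $e$ of $\graph'$, color the balls of $\objects\setminus\core$ that reach into $e$ by the non-monochromatic chain method, using two colors different from $c(B_u)$ and $c(B_v)$, where $u,v$ are the endpoints in $\graph$ of the edge containing $e$; since only two of the four colors are forbidden, two remain available. Third, by Lemma~\ref{lem:3-balls} every still-uncolored ball $B$ is contained in the union of at most three balls of $\core$, so I would color $B$ with any color different from the colors of those (at most three) core balls, which is always possible with four colors.

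To verify that this yields an NM-coloring, consider a point $p$ lying in at least two balls. The crucial observation is that, since the coloring of $\core$ is already non-monochromatic, any $p$ contained in two or more \emph{core} balls is automatically safe, so only points covered by at most one core ball need attention. If $p$ lies in no core ball, then $p\in\graph'$ and $p$ is covered solely by chain-colored balls on a single edge of $\graph'$, which are non-monochromatic there. If $p$ lies in exactly one core ball $B_i$, then every other ball covering $p$---whether a chain-colored ball (colored differently from $c(B_u),c(B_v)$) or a ball covered by $\core$ (colored differently from $c(B_i)$, since $B_i$ is among its at most three covering core balls)---has a color different from $c(B_i)$, so the set of balls at $p$ is non-monochromatic. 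I expect this bookkeeping---checking that every class of newly colored ball avoids the color of the unique core ball covering a given point---to be the most delicate part of the upper bound.

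For the lower bound I would build a network space whose topology is that of $K_4$: four internal nodes $c_1,c_2,c_3,c_4$ of degree three, with every pair joined by an edge of length $3$ (so $t=4$ and the space is planar). Place four balls $B_i:=B(c_i,2)$, one centered at each node, and let $p_{ij}$ be the midpoint of the edge $c_ic_j$. Then $d(c_i,p_{ij})=d(c_j,p_{ij})=3/2\le 2$, whereas the distance from $p_{ij}$ to any third center $c_k$ is $3/2+3>2$; hence $p_{ij}$ is covered by exactly $B_i$ and $B_j$. Each of the six midpoints therefore forces $c(B_i)\neq c(B_j)$ for the corresponding pair, so the four balls must receive four distinct colors, giving $\NMCN{planar}{balls}(t;n)\ge 4$ for $t\ge 4$ and $n\ge 4$. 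The point worth stressing is that no single cycle can realize all six private intersections---along one circle the coverage sets of consecutive arcs differ by exactly one ball, so at most four of the regions can be two-element sets, whereas six are needed---which is precisely why a genuinely planar ($K_4$) structure, mirroring the use of the Four Color Theorem in the upper bound, is required.
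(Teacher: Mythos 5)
Your proof is correct and takes essentially the same route as the paper: the upper bound assembles Lemma~\ref{lem:net-core}, the edge-wise chain coloring on $\graph'$ avoiding $c(B_u)$ and $c(B_v)$, and Lemma~\ref{lem:3-balls} exactly as the paper does, and your lower bound is the paper's $K_4$ construction with rescaled parameters (the paper uses unit-length edges and balls of radius $2/3$; yours uses length $3$ and radius $2$). If anything, your case analysis at a point covered by exactly one core ball is more explicit than the paper's, which simply asserts that the coloring is easily seen to be non-monochromatic---though note that the step you flag as delicate (that the unique core ball containing such a point $p$ must be $B_u$ or $B_v$ for the relevant edge) does require a short coverage argument in the style of Lemma~\ref{lem:connected_core}, which neither you nor the paper spells out.
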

\begin{proof}
The coloring obviously uses four colors at most. Moreover, it is
easy to see the coloring is non-monochromatic. It remains to
show that there is an instance requiring at least four colors.
To that purpose, let~$\graph$ be an embedding of~$K_4$ where all edges
have length one. Then, for each node~$v$ of~$\graph$, we create
the ball~$B(v,2/3)$. Since no two balls can have the same color,
we need at least four colors.
  \end{proof}

\mypara{CF-coloring.}
We now explain how to CF-color balls on a planar network.
As before, define~$\core := \{B=B(c,r) \mid \exists x: B=B_x \}$. 
We first CF-color $\core$ using 
the following recursive algorithm introduced by Smorodinsky~\cite{smor-geomCF-06}:
we select a maximum independent set in~$C_1:=\core$, we give
it color 1, place all uncolored balls in~$C_2$, and recurse. We claim
that for all~$i$, the Delauney graph~$D_i:=(C_i,E_i)$ on the balls in~$C_i$
is planar, where~$E_i:= \{ \{B_1, B_2\} \mid \exists p\in \graph: p \in B_1 \cap B_2 \text{ and } \forall B\notin \{ B_1, B_2\} :  p \notin B \} $.

\begin{lemma}\label{lem:D_i-planar}
$D_i$ is planar.
\end{lemma}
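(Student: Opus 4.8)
The plan is to show that the Delaunay graph $D_i$ is planar by exhibiting a planar embedding inherited from the planar network space $\graph$, in the same spirit as the argument for $\mathcal{G}_\core$ in Lemma~\ref{lem:net-core}. The key observation is that each ball $B\in C_i$ is assigned to a set of nodes of $\graph$ that forms a connected subspace (by Lemma~\ref{lem:connected_core}(ii)): if $B=B_x$ then $B$ is also assigned to every node on a shortest path from $x$ to the center of $B$, so the union of all such assigned nodes and the network edges between them is connected in $\graph$. These connected pieces, one per ball, are pairwise disjoint (each node is assigned a unique ball), and they live inside the planar object $\graph$. So I would first set up this correspondence between balls in $C_i$ and disjoint connected regions of $\graph$.

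Next I would show that each Delaunay edge of $D_i$ can be realized by a curve in $\graph$ connecting the two corresponding regions, with the curves pairwise non-crossing. For an edge $\{B_1,B_2\}\in E_i$, there is a witness point $p\in B_1\cap B_2$ lying in no other ball of $C_i$. The natural curve to draw is the concatenation of a shortest path from $p$ to (a node assigned to) $B_1$ and a shortest path from $p$ to (a node assigned to) $B_2$. The main thing to verify is that these witness curves stay within the regions they are supposed to connect, and that distinct Delaunay edges give non-crossing curves—this is where I expect the difficulty to lie, since two shortest paths in $\graph$ could share points. The cleanest route is to argue that if the routing curves for two Delaunay edges shared an interior point $q$, then $q$ would be covered by a ball forcing a contradiction with the ``no other ball contains the witness'' condition, analogously to the coverage argument at the end of the proof of Lemma~\ref{lem:net-core}, where membership in $B_v$ was deduced from $\cov_v(B_v)\geq \cov_v(B_1)\geq d(v,p)$.

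Concretely, I would contract each connected region to a single point and draw the witness curves as edges; planarity of $\graph$ (as a subset of $\Reals^2$) then yields a planar drawing of $D_i$, provided the curves are internally disjoint from the regions other than their two endpoints and from each other. The coverage inequalities from Lemma~\ref{lem:connected_core} should guarantee that a shortest path from the witness point $p$ to a node assigned to $B_1$ stays inside $B_1$, hence inside a region whose only ``exit'' toward $p$ is controlled; any crossing of two witness curves at a point $q$ would then place $q$ inside a third ball, contradicting that the relevant witness point lies in exactly $\{B_1,B_2\}$.

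\textbf{The main obstacle} I anticipate is the non-crossing argument: in a general planar network space two geodesics may genuinely overlap along an edge, so I cannot simply assert disjointness of the routing curves. I would handle this by perturbing overlapping curves within the regions (which is harmless, since the regions are the contracted vertices of $D_i$ and internal routing inside a single region does not affect planarity), and by using the Delaunay witness condition to rule out crossings that occur \emph{outside} the regions. If a clean topological perturbation argument proves awkward, the fallback is to appeal directly to the fact that a graph drawn in the plane with vertices as disjoint connected regions and edges as curves meeting only at shared regions is planar—essentially the same minor/contraction argument already used to conclude $\mathcal{G}_\core$ is planar in Lemma~\ref{lem:net-core}.
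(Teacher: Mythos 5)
Your high-level plan---draw $D_i$ inside the planar space $\graph$, route each Delaunay edge through its witness point by geodesics, and invoke planarity of $\graph$---is the paper's strategy (the paper even dispenses with your regions: it places each vertex simply at the ball's center, so no contraction argument is needed). But the entire content of the lemma is exactly the step you leave open, ruling out a shared point of two routing curves, and your proposed handling of it fails in two ways. First, the containment you rely on is false: a shortest path from the witness $p$ to \emph{a node assigned to} $B_1$ need not stay inside $B_1$. What is true is that a geodesic from $p$ to the \emph{center} of $B_1$ stays in $B_1$; a geodesic to an arbitrary assigned node can leave the ball (e.g.\ in a cycle consisting of a long path covered by $B_1$ and a short edge whose middle is uncovered, the geodesic from a far point of $B_1$ to an assigned node shortcuts through the uncovered middle). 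Second, the contradiction you state---a crossing point $q$ would lie ``inside a third ball, contradicting that the relevant witness point lies in exactly $\{B_1,B_2\}$''---is a non sequitur: $q$ is not the witness, and nothing in the definition of $E_i$ forbids points of a routing curve other than the witness itself from lying in other balls.

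The argument that works (the paper's) needs both the routing to centers and a distance comparison at the shared point. Route the edge $\{B_1,B_3\}$ as the geodesic from $c_1$ to the witness $w_{13}$ followed by the geodesic from $w_{13}$ to $c_3$, and similarly route $\{B_2,B_4\}$ through its witness $w_{24}$. If the two curves share a point $x$, assume w.l.o.g.\ that $d(x,w_{24})\geq d(x,w_{13})$, and let $B_j$ with $j\in\{2,4\}$ be the ball such that $x$ lies on the geodesic from its center $c_j$ to $w_{24}$. Then
\[
d(c_j,w_{13}) \;\leq\; d(c_j,x)+d(x,w_{13}) \;\leq\; d(c_j,x)+d(x,w_{24}) \;=\; d(c_j,w_{24}) \;\leq\; r_j,
\]
so $w_{13}\in B_j$, contradicting that $w_{13}$ is a witness for $\{B_1,B_3\}$ (with minor extra care when the two Delaunay edges share an endpoint). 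Note that this crucially uses that the curve through $x$ is a geodesic \emph{from a center} to a witness, so that $d(c_j,x)+d(x,w_{24})$ equals $d(c_j,w_{24})$ and is hence at most $r_j$; with your routing toward assigned nodes this quantity cannot be bounded by $r_j$ and the chain of inequalities collapses. So as written your proof has a genuine gap at its central step; replacing the regions by the ball centers and adding the comparison above repairs it.
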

\begin{proof}
We draw~$D_i$ using the drawing of~$\graph$ as follows:
each ball is represented by its center. Then, for every
edge in~$D_i$, we find a witness, that is a point
contained in the intersection of the two balls and not in
any other ball. We finally draw the edge as two geodesics
on~$\graph$: one from
one endpoint to the witness point, and the other from the witness
point to the other endpoint.

We claim that this drawing is plane.
Suppose by contradiction that it is not the case and there
is a crossing between the two edges~$B_1B_3$ 
and~$B_2B_4$. Suppose also that the
endpoints of the two edges are distinct: the argument
when an endpoint is shared is similar. Since we based
our drawing on~$\graph$, a planar graph, the point
where the two edges cross must be a node~$x$ in~$\graph$. 
Let~$w_{13}$ be the witness of the edge~$B_1B_3$ and~$w_{24}$ 
the winess of~$B_2B_4$. 
Fig.~\ref{fig:crossing} shows the two crossing edges,
with the crossing node~$x$ in the middle, and the
two witnesses~$w_{13}$ and~$w_{24}$ used to draw the geodesics.

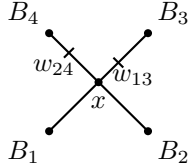
\begin{figure}
\begin{center}
\begin{tikzpicture}[scale=0.65]
\node at (-1,-1) (a) {};
\node at (1,-1) (b) {};
\node at (0,0) (c) {};
\node at (-1,1) (d) {};
\node at (1,1) (e) {};

\foreach \a/\l/\loc in {a/B_1/below left,
					 b/B_2/below right,
					 c/ /below,
					 d/B_4/above left,
					 e/B_3/above right}{
	\draw[thick, fill] (\a) circle (0.06);
	\node[\loc] at (\a) {$\l$};
}
\node at ([yshift=-4mm]c.center) {$x$};

\draw[thick] (a.center) -- (c.center) -- (d.center);
\draw[thick] (b.center) -- (c.center) -- (e.center);
\draw[thick] (-0.7,0.5) --++ (0.2,0.2);
\node at (-0.9,0.3) () {$w_{24}$};
\draw[thick] (0.5,0.3) --++ (-0.2,0.2);
\node at (0.7,0.1) () {$w_{13}$};
\end{tikzpicture}
\end{center}
\caption{We suppose for a contradiction that the edges~$B_1B_3$ 
and~$B_2B_4$ cross. The crossing point is a node~$x$ of~$\graph$. 
Let~$w_{13}$ be the witness of the edge~$B_1B_3$ and~$w_{24}$ 
the winess of~$B_2B_4$. }
\label{fig:crossing}
\end{figure}

Suppose, without loss of generality,
that the distance from~$x$ to~$w_{24}$ is greater than or equal to 
the distance from ~$x$ to~$w_{13}$. Thus, the distance from the center
of~$B_1$ to~$w_{24}$ is greater than or equal to~$w_{13}$. 
Hence,~$w_{13}$ is also contained
in the ball~$B_1$, which contradicts the definition of a
witness. Thus, the drawing is plane.
  \end{proof}

Using this lemma and the Four Color Theorem, we get a coloring
on~$\core$ using~$\lceil \log_{4/3} t \rceil$ colors. 
Note that this method does not give an efficient algorithm
because of the use of the Four Color Theorem. For a fast algorithm, we can
use a linear-time 
algorithm~\cite{Chiba_5_coloring} to find an independent set of
size at least~$n/5$, leading
to~$\lceil \log_{5/4} t \rceil$ colors. 

We then color the balls in~$\objects \setminus \core$.
Using Lemma~\ref{lem:not-core-then-in-single-edge}, we have
that for any such ball~$B$, the set of points contained
in~$B$ but not in any ball in~$\core$ is contained
in one edge of~$\graph$.
Therefore, if we cut~$\cup\mathcal{C}$ out of~$\graph$,
the remaining space is a union of disjoint segments,
and any object that is not colored is contained in
at most one segment. We can therefore use the chain coloring on
each segment with the two additional colors and the dummy one.

Finally, any point in~$\cup\core$ is contained
in a ball in~$\core$ of unique color, and any point
not in~$\cup\core$, is contained in at most one ball
of each of the two additional colors. Therefore,
the coloring is conflict-free. This yields the following theorem.

\begin{theorem}
$\CFCN{planar}{balls} (t; n) \leqslant \lceil \log_{4/3} t \rceil +3$.
\end{theorem}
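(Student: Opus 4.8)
The plan is to mirror the structure used for balls on tree spaces: first CF-color the ``core'' set $\core$ of balls assigned to internal nodes, and then extend the coloring to all of $\objects$ using a constant number of additional colors. Recall that $|\core|\leq t$, since each of the $t$ internal nodes of $\graph$ is assigned exactly one ball $B_x$. My first step is therefore to bound the number of colors needed to CF-color $\core$ alone.

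For $\core$ I would run Smorodinsky's recursive independent-set scheme described above: set $C_1:=\core$, color a maximum independent set of the Delaunay graph $D_1$ with color~$1$, collect the remaining balls into $C_2$, and recurse on $C_2$ with the next color. The number of colors equals the number of rounds. The crucial ingredient is Lemma~\ref{lem:D_i-planar}, which guarantees that every $D_i$ is planar; by the Four Color Theorem a planar graph on $m$ vertices has an independent set of size at least $m/4$, so each round removes at least a quarter of the balls and $|C_{i+1}|\leq \tfrac34 |C_i|$. Starting from $|C_1|\leq t$, after $\lceil\log_{4/3} t\rceil$ rounds no balls remain, so $\core$ is colored with $\lceil\log_{4/3} t\rceil$ colors, and that this coloring is conflict-free is exactly the property of Smorodinsky's scheme.

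Next I would extend the coloring to $\objects\setminus\core$ with three fresh colors. By Lemma~\ref{lem:not-core-then-in-single-edge}, for every ball $B\notin\core$ the part of $B$ not covered by $\cup\core$ lies inside a single edge of $\graph$. Hence, after deleting $\cup\core$ from $\graph$, the leftover space $\graph':=\graph\setminus(\cup\core)$ is a disjoint union of segments, and each uncolored ball meets at most one of them. On each such segment I would apply the conflict-free chain method of Lemma~\ref{lem:chain-methods}, reusing the same two proper colors together with one shared dummy (grey) color across all segments; balls that end up fully covered within their segment (including those lying entirely inside $\cup\core$) receive grey. This adds exactly three colors, giving the claimed total of $\lceil\log_{4/3} t\rceil+3$.

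Finally I would verify global conflict-freeness, which I expect to be the only delicate point of the assembly. The key observation is that the three extension colors are disjoint from the $\lceil\log_{4/3} t\rceil$ core colors, so the two phases cannot interfere. For a point $p\in\cup\core$, the core coloring already exhibits a uniquely colored core ball containing~$p$, and no extension ball can share that color, so $p$ still sees a unique color. For a point $p\notin\cup\core$, every ball containing $p$ that is not fully covered restricts to an interval on the single segment containing~$p$, so the chain method guarantees that $p$ lies in at most one ball of each proper extension color and hence has a unique color. Combining the two cases yields a conflict-free coloring of $\objects$ with $\lceil\log_{4/3} t\rceil+3$ colors, as required.
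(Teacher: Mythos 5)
Your proposal follows the paper's proof essentially step for step: the same core $\core$ of assigned balls, Smorodinsky's iterative independent-set scheme made quantitative via Lemma~\ref{lem:D_i-planar} and the Four Color Theorem to get $\lceil \log_{4/3} t\rceil$ colors, and the same extension via Lemma~\ref{lem:not-core-then-in-single-edge} and the chain method with two fresh colors plus a dummy. The only difference is that you spell out details the paper leaves implicit (e.g.\ $|\core|\leq t$ and the $3/4$ shrinkage per round), so this is the same argument, correctly executed.
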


\section{Concluding Remarks} 
We studied NM- and CF-colorings on network spaces, where the
objects to be colored are connected regions of the network space. 
We showed that the
number of colors can be bounded as a function of the complexity 
(which depends on the type of space and of objects) 
of the network space 
and the objects, rather than on the number of objects. 
All our bounds are tight up to some constants, except 
for $\CFCN{tree}{trees} (k,\ell; n)$
where the ~upper bound 
is a factor~$\ell$ away from the lower bound. Closing this gap 
remains an open problem. It would also be interesting to find 
bounds on general connected objects on any network space, or 
other settings where the number of colors depends on the complexity 
of the space and objects rather the number of objects. 

\bibliography{creatureCFbib}
\newcommand{\cgta}{\emph{Comput.\ Geom.\ Theory Appl.}\xspace}
\newcommand{\dcg}{\emph{Discr.\ Comput.\ Geom.}\xspace}
\newcommand{\ijcga}{\emph{Int.\ J.\ Comput.\ Geom.\ Appl.}\xspace}
\newcommand{\algor}{\emph{Algorithmica}\xspace}
\newcommand{\sicomp}{\emph{SIAM J.\ Comput.}\xspace}
\newcommand{\jalg}{\emph{J.\ Alg.}\xspace}
\newcommand{\jda}{\emph{J.\ Discr.\ Alg.}\xspace}
\newcommand{\ipl}{\emph{Inf.\ Proc.\ Lett.}\xspace}
\newcommand{\jacm}{\emph{J.~ACM}\xspace}

\newcommand{\socg}[1]{In \emph{Proc.\ #1 ACM Symp.\ Comput.\ Geom.}\xspace}
\newcommand{\soda}[1]{In \emph{Proc.\ #1 ACM-SIAM Symp.\ Discr.\ Alg.}\xspace}
\newcommand{\stoc}[1]{In \emph{Proc.\ #1 ACM Symp.\ Theory Comp.}\xspace}
\newcommand{\esa}[1]{In \emph{Proc.\ #1 Europ.\ Symp.\ Alg.}\xspace}
\newcommand{\swat}[1]{In \emph{Proc.\ #1 Scandinavian Workshop.\ Alg.\ Theory}\xspace}

\end{document}